\def\@fpheader{\relax}
\protected\def\betterat{{\fontfamily{ptm}\selectfont @}}
\renewcommand{\baselinestretch}{1.2}
\newcommand\be{\begin{equation}}
\newcommand\ee{\end{equation}}
\newcommand\beq{\begin{equation}}
\newcommand\eeq{\end{equation}}
\newcommand\bea{\begin{eqnarray}}
\newcommand\eea{\end{eqnarray}}
\newcommand\ba{\begin{array}}
\newcommand\ea{\end{array}}
\newcommand\eref[1]{(\ref{#1})}
\newcommand\comment[1]{}
\newcommand\pa{\partial}
\renewcommand\tilde{\widetilde}
\newcommand\eps{\epsilon^{\alpha \beta}}
\newcommand\barH{\overline{H}}
\newcommand\barF{\overline{F}}
\newcommand{\tu}{\tilde{u}}
\newcommand{\td}{\tilde{d}}
\newcommand{\te}{\tilde{e}}
\newtheorem{thm}{Theorem}[section]
\newtheorem*{thm*}{Theorem}
\newtheorem{prop}[thm]{Proposition}
\newtheorem*{prop*}{Proposition}
\newtheorem{lem}[thm]{Lemma}
\newtheorem{exm}[thm]{Example}
\newtheorem{comp}[thm]{Computation}
\newcommand{\PP}{{\mathbb P}}
\newcommand{\CC}{{\mathbb C}}
\newcommand{\ZZ}{{\mathbb Z}}
\newcommand{\cV}{{\mathcal V}}
\def\eps{\epsilon^{\alpha \beta}}
\def\barH{\overline{H}}
\begin{document}

\title{The Vacuum Moduli Space of the\\ Minimal Supersymmetric Standard Model}
\author[*,\dagger]{Yang-Hui He,}
\author[\sharp]{Vishnu Jejjala,}
\author[\flat]{Brent D.\ Nelson,}
\author[\times,\star]{Hal Schenck,}
\author[\circ,\star]{Michael Stillman}

\affiliation[*]{London Institute for Mathematical Sciences, Royal Institution, London, W1S 4BS, UK}
\affiliation[\dagger]{Merton College, University of Oxford, OX1 4JD, UK}
\affiliation[\sharp]{Mandelstam Institute for Theoretical Physics, School of Physics, and NITheCS, University of the Witwatersrand, Johannesburg, WITS 2050, South Africa}
\affiliation[\flat]{Department of Physics, Northeastern University, Boston, MA 02115, USA}
\affiliation[\times]{Department of Mathematics, Auburn University, Auburn, AL 36849, USA}
\affiliation[\star]{Mathematical Institute, University of Oxford, OX2 6GG, UK}
\affiliation[\circ]{Department of Mathematics, Cornell University, Ithaca, NY 14853, USA\\}

\emailAdd{hey@maths.ox.ac.uk}
\emailAdd{v.jejjala@wits.ac.za}
\emailAdd{b.nelson@northeastern.edu}
\emailAdd{hks0015@auburn.edu}
\emailAdd{mike@math.cornell.edu}


\begin{abstract}{
A starting point in the study of the minimal supersymmetric Standard Model (MSSM) is the vacuum moduli space, which is a highly complicated algebraic variety:
it is the image of an affine variety $X \subset \CC^{49}$ under a symplectic quotient map $\phi$ to $\CC^{973}$.
Previous work~\cite{He:2014oha} computed the vacuum moduli space of the electroweak sector; geometrically this corresponds to studying a restriction of $\phi$:
$\CC^{13} \stackrel{\phi^{\texttt{res}}}{\longrightarrow} \CC^{22}$.
We analyze the geometry of the full vacuum moduli space for superpotentials $W_{\rm minimal}$ (without neutrinos) and $W_{\rm MSSM}$ (with neutrinos) in $\mathbb{C}^{973}$. 
In both cases, we prove that $X$ consists of three irreducible components $X_1$, $X_2$, and $X_3$, and determine the images $M_i$ of the $X_i$ under $\phi$. For $W_{\rm minimal}$ we show they have, respectively, dimensions $1$, $15$, and $29$, and prove that each of the $M_i$ is a rational variety, while for 
$W_{\rm MSSM}$ we show that $M_3$ is the only component. 
 Restricting the $M_i$ to the electroweak sector, we recover the results in~\cite{He:2014oha}.  We describe the components of the vacuum moduli space geometrically in terms of incidence varieties to a product of Segre varieties.} 
\end{abstract}

\setlength{\parindent}{15pt}
\setlength{\parskip}{10pt}
\renewcommand{\baselinestretch}{1.15}

\maketitle

\newpage

\section{Introduction}
The Standard Model of particle physics~\cite{Glashow:1961tr,Weinberg:1967tq,Salam:1968rm,Glashow:1970gm} is the most precisely tested theory in the history of science~\cite{Morel:2020dww}.
It explains the electromagnetic, strong, and weak interactions as well as the Higgs mechanism and classifies the known elementary particles~\cite{Burgess:2006hbd,Schwartz:2014sze,Hubsch:2015vpa}.
Though recent collider experiments hint at intriguing mismatches between theory and observation, to date there is no $5\sigma$ discovery of particle physics beyond the Standard Model~\cite{ParticleDataGroup:2024cfk}.
We know, however, that new physics is compulsory.
For example, the Standard Model does not explain what dark matter is, the mechanism by which neutrinos acquire mass and what these masses are, or the requisite CP violation necessary to accommodate a matter/antimatter asymmetry in the early Universe~\cite{Ellis:2009pz,Langacker:2017uah}.
We must also ultimately reconcile quantum field theory with gravitation~\cite{Green:1987sp}.

\subsection{Supersymmetry} We wish to understand not only how particle physics is organized but also whence this structure originates.
The best understood top down constructions of Standard Model-like physics arise from Calabi--Yau compactifications of ten-dimensional superstring theory down to a four-dimensional effective theory in $\mathbb{R}^{1,3}$~\cite{candelas1985vacuum,Green:1987mn}.
By virtue of having a Ricci-flat metric on the Calabi--Yau space, such heterotic compactifications enjoy $\mathcal{N}=1$ supersymmetry in four dimensions and are source free, leading order solutions to the Einstein equation.
Supersymmetry posits a mass degeneracy between fermionic and bosonic degrees of freedom.
(See~\cite{Bailin:1994qt,Weinberg:2000cr} for reviews.)
This is not seen empirically in Nature: there is not, for instance, a scalar particle with the same mass as the electron.
At best, supersymmetry is realized at low energies as a softly broken symmetry.
Nevertheless, it is a theoretically appealing idea: it explicates the hierarchy between the electroweak symmetry breaking scale and the Planck mass, it allows for the unification of gauge couplings, it supplies candidates for weakly interacting dark matter, and it can enhance the amount of CP violation within the Standard Model.
The expectation of detecting superpartners of Standard Model particles at the Large Hadron Collider is so far sadly dashed by data.
There is no experimental corroboration of supersymmetry at scales below $1$ TeV~\cite{Canepa:2019hph,ParticleDataGroup:2024cfk}.
Optimistically, perhaps this only means that parameters in a supersymmetric Standard Model are somewhat finely tuned and the superpartners are accessible at a higher energy regime to be probed by the next generation of particle accelerators.
We are agnostic about this possibility.

Because it is the simplest theory consistent with observation and compatible with the modern framework for quantum gravity, we adopt the view that the minimal supersymmetric Standard Model (MSSM)~\cite{Dimopoulos:1981zb} remains important to study in its own right.
Since cosmological measurements indicate that our Universe today most closely resembles de Sitter space~\cite{SupernovaSearchTeam:1998fmf,SupernovaCosmologyProject:1998vns}, which is incompatible with supersymmetry~\cite{Gibbons:1982fy,Freedman:2012zz}, we of course view this analysis as an initial step.
The MSSM is a modestly unreal world, but its physics is not excluded by data.

To investigate a supersymmetric gauge theory, we must examine its matter content and interactions.
The superpotential $W(\Phi)$ is a holomorphic function of the superfields that encodes this information.
The vacuum is the simultaneous solution to the F-flatness conditions obtained from taking derivatives of the superpotential with respect to the scalar components of the chiral superfields and the D-flatness conditions associated to the generators of the gauge group, which for the MSSM is $G = SU(3)_C\times SU(2)_L\times U(1)_Y$.\footnote{
In this work, we do not differentiate between $G$ and $G/N$, where $N$ is a subgroup of $\mathbb{Z}_6$.
This subtlety distinguishes different grand unified theories~\cite{Georgi:1974sy}.
}
This describes the set of zero energy states modulo gauge transformations.
In general, the field configurations defined by scalar vacuum expectation values (vevs) for which the potential determined by the F-terms and D-terms vanishes does not specify a unique point in field space.
Rather, there is a \textit{vacuum moduli space} of solutions~\cite{Witten:1981nf}.\footnote{
It should be emphasized that throughout we only consider the so-called Higgs branch of the vacuum moduli space.
The Coulomb branch~\cite{Intriligator:1995au}, on which there has been much recent activity, will be left to a separate investigation.}
This is a manifold $\mathcal{M}$, conveniently described in algebraic geometry.
Moreover, since the superpotential does not acquire quantum corrections in perturbation theory, the vacuum moduli space $\mathcal{M}$ is a robust and defining feature of the theory.\footnote{
The superpotential does receive non-perturbative corrections, \textit{e.g.}, from instantons.
This means that the quantum moduli space can be different from the classical moduli space.
SQCD with $N_c = N_f$ is an example of a theory that exhibits this phenomenon: the origin in the classical moduli space is excised in the true, quantum moduli space~\cite{Seiberg:1994bz}.}
The gauge invariant operators (GIOs) of a supersymmetric field theory provide coordinates on $\mathcal{M}$.
In this work, we complete the program initiated in~\cite{Gray:2005sr,Gray:2006jb,Gray:2008yu,He:2014oha} and explicitly solve for the full classical vacuum moduli space of the MSSM with a renormalizable superpotential.

\subsection{Dramatis person\ae} 
Throughout this paper, we will use the standard notation that the matter content and the superpotential (taken to be the minimal renormalizable one) of the MSSM are:
\[
\begin{array}{cc}
\mbox{
\begin{tabular}{|c|c|}\hline
$Q_{a, \alpha}^i$ & doublet quarks \\ \hline
$u_a^i$ & singlet up-type quarks \\ \hline
$d_a^i$ & singlet down-type quarks \\ \hline
$L_{\alpha}^i$ & doublet leptons \\ \hline
$e^i$ & singlet leptons \\ \hline
$\nu^i$ & neutrinos \\ \hline
$H_\alpha$ & up-type Higgs \\ \hline
$\barH_\alpha$ & down-type Higgs \\ \hline
\end{tabular}
}
& \quad
\begin{array}{rcl}
W_{\rm MSSM} &=& C^0 H_\alpha \barH_\beta \eps + C^1_{ij} Q^i_{a,\alpha} u^j_a H_\beta \eps \\
&& +\, C^2_{ij} Q^i_{a,\alpha} d^j_a \barH_\beta \eps + C^3_{ij} e^i L^j_{\alpha} \barH_\beta \eps \\
&& +\, C^4_{ij} \nu^i \nu^j + C^5_{ij} \nu^i L^j_{\alpha} H_\beta \eps ~,\\
\text{and }&&\\
&&\\
W_{\rm minimal} & = & W_{\rm MSSM} \mbox{ with }\nu^i \mbox{ set to }0 ~,
\end{array}
\end{array}
\]
where $i = 1,2,3$ (corresponding to generation or family), $a = 1,2,3$ (corresponding to color), and $\alpha,\beta = 1,2$ (corresponding to the weak interaction).
In the superpotential, the coupling constants $C^A$ are taken to be generic (complex) numbers.
\vspace{-.1in}
\subsection{Outline of paper and main results}
\vspace{-.1in}
Much of the introduction should be familiar to high energy physicists.
The context for why solving for the vacuum moduli space of the MSSM is an interesting problem to study is largely directed to algebraic geometers.
The organization of the remainder of the paper is as follows.
\vspace{-.1in}
\begin{itemize}
    \item In \S\ref{sec:preliminaries}, we describe an algorithm for finding the vacuum moduli space of a four-dimensional $\mathcal{N}=1$ supersymmetric field theory in terms of the fields and GIOs. In order to apply this algorithm and to set notation, we present the MSSM matter content and GIOs. We close \S\ref{sec:preliminaries} with a discussion of previous work on the problem. In parallel to the introduction addressed to algebraic geometers, this section couches the necessary mathematics in the language of $\mathcal{N}=1$ quantum field theory in a manner that we hope is accessible to particle theorists.
    \item In \S\ref{sec:three}, we analyze the master space of the MSSM, which is a complex affine algebraic variety $X \subset \CC^{49}$. Algebraically it corresponds to the Jacobian ideal of the superpotential $W$; geometrically it defines the singular locus of the hypersurface defined by the vanishing of $W$. We prove $X$ consists of three irreducible components $X_1, X_2,$ and $X_3$, and give explicit defining equations for each component. 
    \item In \S\ref{sec:four}, we begin our analysis of the vacuum moduli space (henceforth abbreviated VMS). We denote by $M_i$ the image of $X_i$ under the symplectic quotient map. We show that both $X_1$ and $X_2$ map to relatively small subspaces of $\CC^{973}$ (to $\CC^3$ and $\CC^{72}$, respectively). We show that $M_1$ is a line, and that $M_2$ is birational to a variety in $\CC^{39}$ which is an incidence correspondence between Segre varieties. Restriction to the electroweak sector corresponds algebraically to setting $Q=u=d=0$, and we show the restriction of $M_2$ yields exactly the result in~\cite{He:2014oha}.
    \item In \S\ref{sec:five}, we tackle the largest component of the master space, $X_3$. This component is of dimension $41$ in $\CC^{49}$, and maps to a $783$ dimensional subspace of the target space $\CC^{973}$; it is a complete intersection defined by four linear forms and four quadrics. We prove that the image $M_3$ is a rational variety of dimension $29$. To do this, we factor the map given by the $13$ types of non-vanishing GIOs on $X_3$ into individual steps, and construct a tower of projections terminating in a $29$ dimensional base variety in $\CC^{63}$, for which we give explicit defining equations. 
    \item In \S\ref{sec:six}, we determine the effect of adding neutrinos to the superpotential, and the concomitant changes on the master space and vacuum moduli space.  
    \end{itemize}
    
\begin{thm}
The master space $X$ and the vacuum moduli space $\mathcal{M}$ of $W_{\rm minimal}$ are complex affine algebraic varieties. They both have three irreducible components, which intersect as below. Numbers indicate (affine) dimensions of components.
\vspace{-.05in}
  \begin{figure}[h]
    \centering
    \hfill
    \subfigure{
        \begin{tikzpicture}[thick]]
            \node[label={$X_1$}] (X_1) at (-2,1) {};
            \node[label={$X_2$}] (X_2) at (4,1) {};
            \node[label={$X_3$}] (X_3) at (1,-5) {};
    
            \node[label={$23$}] (23) at (-1,0) {};
            \node[label={$27$}] (27) at (3,0) {};
            \node[label={$41$}] (41) at (1,-3.5) {};
    
            \node[label={$22$}] (22) at (1,0.25) {};
            \node[label={$22$}] (22) at (-0.25,-1.75) {};
            \node[label={$26$}] (26) at (2.25,-1.75) {};
    
            \node[label={$21$}] (21) at (1,-1.25) {};
    
            \draw (0,0) circle(2cm);
            \draw (2,0) circle(2cm);
            \draw (1,-2) circle(2cm);
        \end{tikzpicture}
    }
    \hfill
    \hfill
    \hfill
    \subfigure{
        \begin{tikzpicture}[thick]
           \node[label={$M_1$}] (M_1) at (-.25,-.75) {};
            \node[label={$M_2$}] (M_2) at (3.25,2.75) {};
            \node[label={$M_3$}] (M_3) at (0,-3) {};

            \node[label={$1$}] (1) at (.35,-.075) {};
            \node[label={$15$}] (15) at (1.75,1.5) {};
    
            \node[label={$14$}] (14) at (1,0.65) {};
            
            \node[label={$29$}] (29) at (0,-1.55) {};
            
            \draw (0,0) circle(2cm);
            \draw (1.75,1.75) circle(1.5cm);
            \draw (.35,.35) circle(0.5cm);
        \end{tikzpicture}
    }
    \hfill
    \vspace{-.2in}
     \caption{\textsf{The components $X_i$ and $M_i$ and dimensions of intersections.}}
    \label{fig:conifold}
\end{figure}

\vspace{-.15in}
\noindent We determine explicit defining equations for birational models of each of these varieties, and describe the corresponding birational models geometrically:
\vspace{-.1in}
\begin{itemize}[leftmargin=.5in,itemsep=0in]
\item $M_1$ is a line.
\item $M_2$ is a rational variety of dimension $15$. 
\item $M_3$ is a rational variety of dimension $29$.
\end{itemize}

\vspace{-.15in}
\noindent In \S\ref{sec:six}, we prove for $W_{\rm MSSM}$ that the vacuum moduli space 
consists of one irreducible component; 
the vanishing ideal is the ideal of $M_3$ along with the neutrino variables.

\noindent For both $W_{\rm MSSM}$ and $W_{\rm minimal}$, we show that the restriction to  $Q=u=d=0$ recovers previous results of~\cite{He:2014oha} on the electroweak sector of the vacuum moduli space.
\end{thm}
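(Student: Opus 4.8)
The plan is to prove the theorem componentwise, since it is in effect an assembly of the structural results of \S\ref{sec:three}--\S\ref{sec:six}. I would start from the master space $X=V(J)$, where $J\subset\CC[Q,u,d,L,e,H,\barH]$ is the Jacobian ideal generated by the $49$ partials of $W_{\rm minimal}$. The key structural observation is the block form of the superpotential: every F-term except $\pa_{H_\beta}W$ and $\pa_{\barH_\beta}W$ is divisible by $H$ or by $\barH$. Hence the locus $\{H=\barH=0\}$ kills all but four generators of $J$, and on it the surviving equations are the four quadrics $C^1_{ij}\eps Q^i_{a,\alpha}u^j_a=0$ and $C^2_{ij}\eps Q^i_{a,\alpha}d^j_a+C^3_{ij}\eps e^i L^j_\alpha=0$. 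This already exhibits the complete-intersection component $X_3$, cut out by four linear forms and four quadrics, of the claimed dimension $41$. On the complementary stratum $H\neq0$ (or $\barH\neq0$), the equations $\pa_uW$, $\pa_dW$, $\pa_eW$ impose rank conditions tying $Q$ and $L$ to the Higgs vevs, and splitting on which Higgs direction is switched on produces the two smaller components $X_1$ and $X_2$. I would make this rigorous by a primary decomposition of $J$, carried out while exploiting the generation symmetry $i,j\in\{1,2,3\}$ and color symmetry in $a$ to keep the Gr\"obner computation tractable, then read off the three components, their dimensions $23,27,41$, and the pairwise and triple intersection dimensions $22,22,26,21$ recorded in Figure~\ref{fig:conifold}.

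With the $X_i$ in hand, the second step is to push each forward under the GIO map $\phi$; since $\mathcal{M}$ is the image of $\phi$, equivalently $\Spec$ of the ring generated by the GIOs, it is a complex affine variety, and this settles the opening assertion of the theorem. For each component I would first determine which gauge-invariant operators do not vanish identically on $X_i$, which fixes the ambient space of $M_i$. On $X_1$ only Higgs-sector invariants survive; they land in $\CC^3$ but satisfy enough relations among themselves to cut out a line, giving $\dim M_1=1$. On $X_2$ the surviving invariants are of lepton--Higgs type, and I would recognize the image as birational to an incidence correspondence realized in $\CC^{39}$ inside a product of Segre varieties; since Segre varieties and the natural incidence loci over them carry explicit rational parametrizations, this yields simultaneously $\dim M_2=15$ and the rationality of $M_2$. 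Specializing these parametrizations to $Q=u=d=0$ collapses $\phi$ to $\phi^{\texttt{res}}:\CC^{13}\to\CC^{22}$, and matching generators and relations of the two ideals recovers the electroweak VMS of~\cite{He:2014oha}.

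The heart of the argument, and the step I expect to be the main obstacle, is the largest image $M_3=\phi(X_3)$. Here $X_3$ has dimension $41$ but maps into a $783$-dimensional coordinate subspace of $\CC^{973}$, so a direct elimination to obtain the vanishing ideal of the image is hopeless. Following \S\ref{sec:five}, my plan is to factor $\phi|_{X_3}$ through the $13$ distinct \emph{types} of non-vanishing GIO and build a tower of projections $M_3\to\cdots\to B$, peeling off one block of coordinates at a time. At each stage I must show the projection is a fibration with rational fibers (of affine-space or Segre type) over a rational base, so that rationality ascends the tower; the delicate points are controlling where fiber dimension jumps and certifying that an actual rational section exists, rather than merely unirationality. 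The tower should terminate in an explicit $29$-dimensional base variety in $\CC^{63}$ whose defining equations I can write down, giving at once $\dim M_3=29$ and a birational ruling of $M_3$ over rational data, hence its rationality. The bookkeeping of which GIO types remain independent after each projection, and the proof that each fiber is rationally parametrized, is where the genuine work lies.

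Finally, for $W_{\rm MSSM}$ I would recompute the Jacobian ideal after reinstating the neutrino fields and the couplings $C^4_{ij}\nu^i\nu^j$ and $C^5_{ij}\nu^i L^j_\alpha H_\beta\eps$. The new F-terms $\pa_{\nu^i}W=2C^4_{ij}\nu^j+C^5_{ij}\eps L^j_\alpha H_\beta$, together with the neutrino-modified $\pa_{L}W$ and $\pa_{H}W$, generically force $\nu=0$ while destroying the extra directions that sustained $X_1$ and $X_2$, so that only the neutrino-extended analogue of $X_3$ survives. I would then show the vanishing ideal of $\mathcal{M}$ is exactly that of $M_3$ together with the neutrino variables, and re-run the $Q=u=d=0$ restriction to confirm agreement with~\cite{He:2014oha} once more. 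The only subtlety here is the genericity of the $C^4,C^5$ couplings in the elimination, which I would handle by exhibiting a single coupling specialization at which the collapse occurs and invoking upper-semicontinuity of fiber dimension.
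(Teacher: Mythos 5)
Your proposal reproduces the paper's overall architecture---stratifying the master space by $H=\barH=0$ to exhibit $X_3$ as a complete intersection of four linear forms and four quadrics, computing the images componentwise, and factoring $\phi|_{X_3}$ through a tower of projections onto an explicit $29$-dimensional base in $\CC^{63}$---but several of your specific mechanisms are wrong or would fail. A minor point first: on $X_1$ it is not the ``Higgs-sector invariants'' that survive; all Higgs-type GIOs ($H\barH$, $LH$, \ldots) lie among the $2\times 2$ minors generating $I_1$ and vanish identically on $X_1$, and the paper shows the \emph{only} non-vanishing type is the baryonic $udd$, three of whose nine members trace out the line $M_1$ (similarly, the survivors on $X_2$ include the quark operators $QdL$, $dddLL$, $(QQQ)_4LLLe$, not just lepton--Higgs ones). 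The serious gap is your treatment of $M_3$. In the paper, every intermediate $Y_i$ in Figure~\ref{ProjectionTower} already has dimension $29$, so the fibers of each $\pi_i$ are generically \emph{points}: each projection is proven generically one-to-one by explicitly expressing the eliminated GIO type as rational functions of the retained ones modulo $I_{X_3}$ (e.g.\ each $uuuee$ is a quadric in the $uude$'s divided by a $udd$ coordinate, Computation~\ref{uuueeinTermsofbase}), and the $\ZZ^4$ multigrading of Table~\ref{tbl:three} is the tool that locates these relations and dictates the order of the tower. Your picture of ``fibrations with rational fibers of affine-space or Segre type, so rationality ascends'' is inconsistent with your own assertion that the base is $29$-dimensional ($=\dim M_3$, forcing generically finite fibers), and as a general principle it is also insufficient: geometrically rational fibers plus a section do not imply the generic fiber is rational over the function field of the base. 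Worse, you have no mechanism to certify generic injectivity, and the paper shows this is a real danger, not a formality: the triple $\{QdL, udd, uuuee\}$ has a $29$-dimensional image, yet the projection of $M_3$ onto it is generically \emph{two-to-one}; this is exactly why the base $\{QdL, udd, uude\}$ is chosen and birationality proven step by step.

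Your neutrino argument inverts what actually happens. The F-terms of $W_{\rm MSSM}$ do not destroy $X_1$ and $X_2$: the paper shows the master space still has three irreducible components $J_1, J_2, J_3$ with $J_1=I_1$ (so $X_1$ is untouched), $I_2\subseteq J_2$ (so $X_2$ merely shrinks, by codimension two), and $J_3=I_3$, with the $\nu^i$ vanishing on every component. The single-component conclusion for the vacuum moduli space is a statement about \emph{images}: every GIO vanishes identically on $V(J_2)$, so it maps to the origin, and the line $M_1$ lies inside $M_3$---nothing disappears upstream in $\CC^{52}$. Relatedly, your genericity-by-semicontinuity device can only bound dimensions of images; it cannot identify the vanishing ideal of the VMS as $I_{M_3}$ together with the neutrino variables, which is what the theorem asserts. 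Finally, you propose to run primary decomposition with fully generic coupling matrices $C^A$; the paper instead first performs the Yukawa change of basis to $\te,\tu,\td$ and proves (first proposition of \S\ref{sec:three}) that the $GL(3,\CC)^{3}$ action preserves the linear span of each of the $28$ GIO types, so the images are unchanged up to linear equivalence. That proposition is not an optional convenience: without it your Gr\"obner computations carry dozens of symbolic parameters and the entire computational strategy, on which your first two paragraphs rely, is unlikely to be tractable.
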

\section{Preliminaries}\label{sec:preliminaries}
\subsection{Defining the vacuum moduli space}\label{sec:vms}
\vskip -.1in To set notation, we briefly review textbook material~\cite{Wess:1992cp,Bailin:1994qt,Weinberg:2000cr,Argyres:2001eva,Terning:2006bq}.
The action of an $\mathcal{N}=1$ supersymmetric quantum field theory in four dimensions is
\begin{align}
\nonumber
S &=& \int d^4x\ \left[ \int d^4\theta\ K(\Phi_i^\dagger, \Phi_i) + \left( \frac{1}{4g^2} \left( \int d^2\theta\ \text{tr}\, \mathcal{W}_\alpha \mathcal{W}^\alpha +
\int d^2\overline{\theta}\ \text{tr}\,
\overline{\mathcal{W}}_\alpha \overline{\mathcal{W}}^\alpha
\right) \right. \right. \\
&& 
\left. \left.
+ \int d^2\theta\ W(\Phi_i) + \int d^2\overline{\theta}\ \overline{W}(\Phi_i^\dagger)  \right) \right] ~. \label{eq:action}
\end{align}
The $\Phi_i$ are chiral superfields, which transform in  representations $R_i$ of the gauge group $G$.
We define the gauge field strength from the chiral spinor superfield
\be
\mathcal{W}_\alpha = i \overline{D}^2 e^{-V} D_\alpha e^V ~,
\ee
with $V$ a vector superfield taking values in the Lie algebra $\mathfrak{g}$.
The K\"ahler potential includes the kinetic terms:
\be
K(\Phi_i^\dagger, \Phi_i) = \Phi^\dagger_i e^V \Phi_i + \ldots ~,
\ee
and the superpotential is a holomorphic function of the chiral superfields that describes interactions.
The integrals are over spacetime and superspace coordinates.
The action~\eref{eq:action} is invariant under the transformations
\be
\Phi\mapsto h\Phi ~, \qquad e^V\mapsto (h^{-1})^\dagger e^V h^{-1} ~, 
\ee
where $h=e^{i\Lambda}$ and $\Lambda$ is a chiral superfield.

The potential of the theory is written in terms of the lowest components of the chiral superfields:
\be
V(\phi,\phi^*) = \sum_i |F^i|^2 + \frac14 \sum_a (D^a)^2 ~. \label{eq:pot}
\ee
The F-terms and D-terms are, respectively,
\bea    
F^i &=& \frac{\pa W}{\pa \phi_i} ~, \label{eq:fterms} \\
D^a &=& g \sum_i \phi_i^* T^a \phi_i ~, \label{eq:dterms}
\eea
where the $T^a$, $a = 1,\ldots, \dim G$, are the generators of $\mathfrak{g}$.\footnote{
We have neglected the Fayet--Iliopoulos D-terms~\cite{Fayet:1974jb} associated to $U(1)$ factors, but these can be included as well.}
There is an F-term for every chiral superfield.
Supersymmetry imposes the condition that the eigenvalues of the Hamiltonian operator are non-negative.
Thus, the minimum energy states --- \textit{viz.}, the \textit{vacua} --- have zero energy, which means the potential must be zero.
Since each term in~\eref{eq:pot} must individually vanish, vacua are defined by the field configurations for which $F^i=0$ and $D^a=0$, for all $i$ and for all $a$.

The vacuum is generically a manifold rather than an isolated point.\footnote{
This is true even within the Standard Model itself.
In the Abelian Higgs mechanism, the renormalizable potential in four dimensions is quartic: $V(\phi) = \lambda( |\phi|^2 - \phi_0^2)^2$ and has the shape of a Mexican hat or the bottom of a wine bottle.
Its minimum is achieved at $\phi = e^{i\theta} \phi_0$.
The vacuum geometry is $S^1$, corresponding to the phase described by $\theta$.
Spontaneous symmetry breaking selects a particular phase and breaks the $SU(2)_L\times U(1)_Y$ electroweak symmetry to the $U(1)_{EM}$ of electromagnetism~\cite{Englert:1964et,Higgs:1964pj,Guralnik:1964eu}.
This trivializes the vacuum to a point.
}
From a physical perspective, at a particular point in the vacuum geometry, the scalar components of some of the chiral superfields may have non-zero vevs.
The gauge groups under which such a field is charged are (at least partially) broken at that point.

To solve for the vacuum, we shall work with $G^c$ the complexification of the gauge group.
The complexification of a real compact Lie group $G$ is a complex Lie group $G^c$ containing $G$ as a real subgroup such that there is a Lie algebra isomorphism between the two.
In particular, the complexification of the group $SU(N)$ is $SL(N,\mathbb{C})$, and the complexification of $U(1)$ is $\mathbb{C}^*$.
A theorem rediscovered several times~\cite{Buccella:1982nx,Procesi:1985hr,Gatto:1986bt,Luty:1995sd} establishes that the (classical) \textit{vacuum moduli space} is the symplectic quotient of the master space, which is the manifold of scalar field vevs that satisfy the F-term equations:
\be
{\cal M} = {\cal F}/\!/G = {\cal F}/G^c ~. \label{eq:vms}
\ee

To explicate this statement, consider a polynomial ring defined by the chiral superfields:
\be
A = \mathbb{C}[\Phi_1,\ldots,\Phi_m] ~.
\ee
We define the \textit{master space}~\cite{Forcella:2008bb,Forcella:2008eh} as the space obtained by quotienting this ring by the ideal of the F-term constraints:
\be
\mathcal{F} = A / X ~, \qquad X = \Big\langle \pa_i W := \frac{\pa W}{\pa \phi_i} = 0 \Big\rangle ~, \quad i=1,\ldots,m ~.
\ee
Luty and Taylor~\cite{Luty:1995sd} show that for every solution to the F-terms, there is a solution to the D-terms in the completion of the orbit of the complexified gauge group $G^c$. 
The \textit{gauge invariant operators} (GIOs) supply a basis for the D-orbits; if the basis has $n$ elements, this allows us to define a map from the master space to $\mathbb{C}^n$. To determine the image of the map (a geometric object) we introduce a second polynomial ring
\be
B = \mathbb{C}[O_1,\ldots,O_n] ~,
\ee
where the $\{O_j\}$ are a minimal basis of $G$-invariants, which are composite operators constructed from the elementary fields $\Phi_i$. Then the vacuum moduli space $\mathcal{M}$ is the vanishing locus of the polynomials in the kernel of the ring map:
\be
\pi: B \to \mathcal{F} ~.
\ee

\subsection{An algorithm}\label{sec:algorithm}

A mathematical implementation of this prescription is straightforward~\cite{Gray:2009fy,Hauenstein:2012xs}.
\begin{enumerate}\setcounter{enumi}{-1}
\item The input is the action $S$, from which we read off the chiral superfields $\Phi_i$, the gauge group $G$, and the superpotential interactions in $W$.
\item Knowing the representations $R_i$ under which the $\Phi_i$ transform, we construct a minimal list of GIOs, $\{O_j\}$.
Here, $i=1,\ldots,m$ and $j=1,\ldots,n$.
\item Define the polynomial ring
\be
R = \mathbb{C}[\phi_1,\ldots,\phi_m,y_1,\ldots,y_n] ~,
\ee
where the $y_j$ are auxiliary variables.
\item Construct the ideal
\be
J = \big\langle \pa_i W, y_j - O_j \big\rangle ~.
\ee
\item Using Gr\"obner bases, eliminate the variables $\phi_i$ from $J \subset R$.
This leaves the ideal $I_\mathcal{M} \subset \mathbb{C}[y_1,\ldots,y_n]$.
This is the ideal corresponding to the vacuum moduli space, $\mathcal{M}$.
\end{enumerate}

An example illustrates the central idea.
\comment{
Consider the $\mathcal{N}=4$ $SU(N)$ super-Yang--Mills gauge theory, which is written in $\mathcal{N}=1$ language in terms of three adjoint chiral fields with superpotential
\be
W = [\phi_1,\phi_2]\phi_3 ~.
\ee
The F-term constraints tell us that
\be
[\phi_i,\phi_j] = 0 ~,
\ee
for any pair $i\ne j$.
Thus, the $\phi_i$, which can be regarded as $N\times N$ traceless matrices, are simultaneously diagonalizable.
That is to say, they live in the Cartan subalgebra.
The vacuum moduli space has real dimension $6(N-1)$, corresponding to the $N-1$ complex degrees of freedom along the diagonals of the three matrices.
The vacuum moduli space is therefore $\mathbb{R}^{6(N-1)}/S_N$, where we mod out by the Weyl group of $SU(N)$.
}
Consider the conifold theory~\cite{Klebanov:1998hh} whose quiver is shown in Figure~\ref{fig:conifold} and whose superpotential is
\be
W = \text{tr}\, \big( \varphi_1 \chi_1 \varphi_2 \chi_2 - \varphi_1 \chi_2 \varphi_2 \chi_1 \big) ~.
\ee
\begin{figure}[h]
    \centering
    \includegraphics[width=0.3\textwidth]{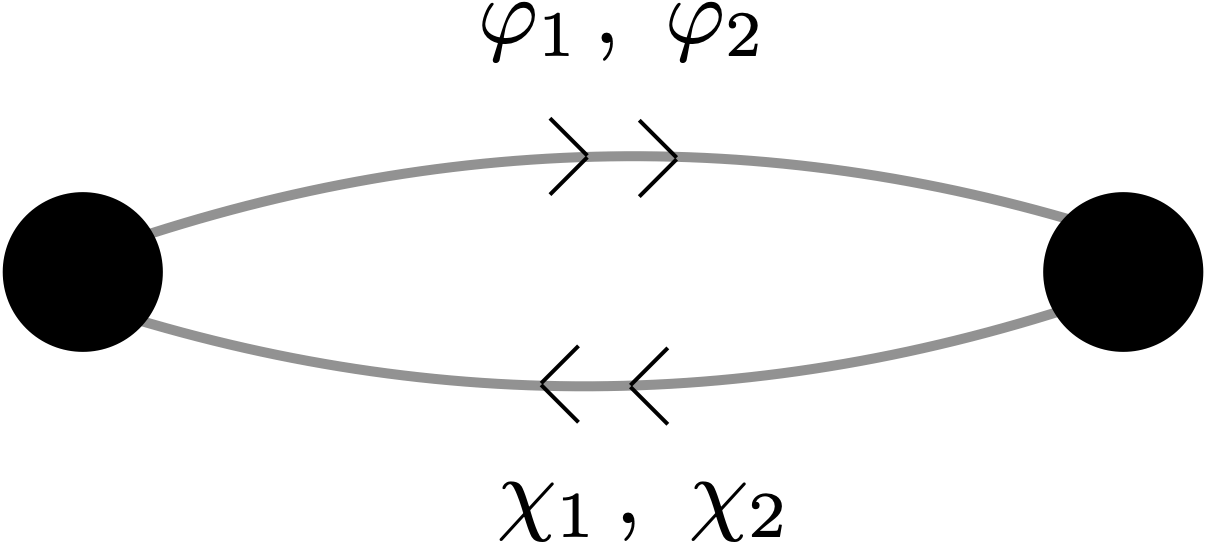}
    \caption{\textsf{The quiver of the conifold.}}
    \label{fig:conifold}
\end{figure}

\noindent
The gauge group is $SU(N)\times SU(N)$.
The matter content of the theory consists of two chiral superfields $\varphi_1$ and $\varphi_2$ that transform as $(\bm{N},\bm{\overline{N}})$ and two chiral superfields $\chi_1$ and $\chi_2$ that transform as $(\bm{\overline{N}},\bm{N})$.
The tensor product of the fundamental and antifundamental representations of $SU(N)$ gives the singlet plus the adjoint.
A GIO is constructed by contracting fundamental indices (raised) with antifundamental indices (lowered):
\be
\varphi_i \chi_j := \text{tr}\, \varphi_i \chi_j = \sum_{a=1}^N \sum_{b=1}^N \big(\varphi_i\big)^a_b \big(\chi_j)^b_a ~.
\ee
Since we take traces over the color indices, the resulting operator is a singlet --- \textit{i.e.}, it transforms in the $(\mathbf{1},\mathbf{1})$ representation of $SU(N)\times SU(N)$ --- and is thus gauge invariant.
To streamline notation we will leave the traces implicit when writing a GIO.
The full list of GIOs for the conifold theory is
\be
\{O_1, O_2, O_3, O_4\} = \{ \varphi_1 \chi_1, \varphi_1 \chi_2, \varphi_2 \chi_1, \varphi_2 \chi_2 \} ~.
\ee
For simplicity, we take $N=1$ so that the fields transform in $U(1)$.
As the fields are functions that commute, the superpotential vanishes, and there are no F-terms.
There is one relation among the GIOs:
\be
\mathcal{M} = O_1 O_4 - O_2 O_3 = 0 ~.
\ee
This is the vacuum moduli space of the theory.
It is as well the defining equation of the conifold itself.
The gauge theory on the worldvolume of a D$3$-brane at the conifold singularity is reproduced by the quiver in Figure~\ref{fig:conifold}.
This is a crucial observation: the vacuum moduli space of the $\mathcal{N}=1$ theory probes the transverse space to the brane~\cite{Douglas:1996sw}.
This applies to a host of theories, \textit{e.g.}, to branes at ADE singularities on ALE spaces.
Since $N$ parallel D3-branes at the singularity are mutually BPS, the vacuum moduli space of the $SU(N)\times SU(N)$ theory is given by $\text{Sym}^N(\mathcal{M})$, the $N$-fold symmetric product.

\subsection{Matter content and GIOs of the MSSM}\label{sec:mssm}

As we aim to apply the algorithm from Section~\ref{sec:algorithm} to determine the vacuum moduli space of the minimal supersymmetric Standard Model (MSSM), let us review the field content and the minimal basis of GIOs.
The matter fields transform in fundamental, antifundamental, and singlet representations of the product group $G=SU(3)_C\times SU(2)_L\times U(1)_Y$.
The complexification is therefore $G^c = SL(3,\mathbb{C})\times SL(2,\mathbb{C})\times \mathbb{C}^*$.
\begin{table}[h]
\centering
\begin{tabular}{|c|c|c|c|}\hline
field & representation & number & description \\ \hline \hline
$Q_{a, \alpha}^i$ & $(\bm{3},\bm{2},\frac16)$ & $18$ & doublet quarks \\ \hline
$u_a^i$ & $(\bm{\overline 3},\bm{1},-\frac23)$ & $9$ & singlet up-type quarks \\ \hline
$d_a^i$ & $(\bm{\overline 3},\bm{1},\frac13)$ & $9$ & singlet down-type quarks \\ \hline
$L_{\alpha}^i$ & $(\bm{1},\bm{2},-\frac12)$ & $6$ & doublet leptons \\ \hline
$e^i$ & $(\bm{1},\bm{1},1)$ & $3$ & singlet leptons \\ \hline
$\nu^i$ & $(\bm{1},\bm{1},0)$ & $3$ & neutrinos \\ \hline
$H_\alpha$ & $(\bm{1},\bm{2},\frac12)$ & $2$ & up-type Higgs \\ \hline
$\barH_\alpha$ & $(\bm{1},\bm{2},-\frac12)$ & $2$ & down-type Higgs \\ \hline
\end{tabular}
\caption{\textsf{Indices and field content conventions for the MSSM.
The enumeration counts each component superfield individually.}}
\label{tab:fields}
\end{table}

\noindent
To set notation, we take Latin indices from the middle of the alphabet $i,j,\ldots$ to be flavor (generation) indices that run over $1,2,3$.
Whereas the gauge indices of $G$ get contracted, multiplicities for the GIOs occur because the flavor indices can remain.
Latin indices from the beginning of the alphabet $a,b,\ldots$ also run over $1,2,3$; these label the color $SU(3)_C$.
Greek indices $\alpha,\beta,\ldots$ run over $1,2$ and label $SU(2)_L$.
The $U(1)_Y$ hypercharge assignments are given treating all the fields as being left handed.
The chiral superfields in the MSSM contain the fermions found in the Standard Model.
As there are fields $Q^i_{a,\alpha}$ charged under all the factors of the gauge group, unlike the example of the conifold, the MSSM is not a quiver gauge theory.\footnote{
It embeds into one, however~\cite{Aldazabal:2000sa,Berenstein:2001nk}.
}
A pair of Higgs doublets are necessary for two reasons: the superpotential must be holomorphic and we must ensure that gauge anomalies associated to triangle diagrams vanish.\footnote{
If, for instance, the Standard Model group embeds in an $SO(10)$ or $SU(5)$ grand unified theory (GUT), we may include additional pairs of Higgs doublets.
In this work, we do not consider this possibility.}
There are $52$ component fields in total that we list in Table~\ref{tab:fields}.
The minimal list of $976$ GIOs constructed from these is quoted in Table~\ref{tab:giofull} in Appendix~\ref{app:GIOs}~\cite{Gherghetta:1995dv,Gray:2006jb,He:2014oha}.

We first consider the MSSM with the superpotential
\bea\label{eq:renorm}
W_{\rm minimal} = \; C^0 \sum_{\alpha, \beta} H_\alpha \barH_\beta \eps + \sum_{i,j} C^1_{ij} \sum_{\alpha, \beta, a} Q^i_{a,\alpha} u^j_a H_\beta \eps \nonumber \\
 + \sum_{i,j} C^2_{ij} \sum_{\alpha, \beta, a} Q^i_{a,\alpha} d^j_a \barH_\beta \eps + \sum_{i,j} C^3_{ij} e^i \sum_{\alpha, \beta} L^j_{\alpha} \barH_\beta \eps ~.
\eea
This neglects neutrinos, which we add in \S\ref{sec:six}; we have been explicit about sums on gauge and flavor indices.
The $C^A$ are coefficient matrices whose elements are generic.
We do not introduce hierarchies by hand.
In principle, the coefficients in the superpotential are real numbers, but for purposes of the calculation, they could just as well be complex.
A computation shows that if the $C^A$ are invertible, we may take them to be identity matrices.
The superpotential includes the $\mu$-term for the Higgs doublets and Yukawa couplings for the quarks and leptons leading to Dirac masses for particles in the Standard Model Lagrangian upon electroweak symmetry breaking.
All of the terms in~\eref{eq:renorm} are renormalizable.
On phenomenological grounds, we exclude the renormalizable GIOs $LH$, $LLe$, and $QdL$ that violate R-parity from the superpotential~\eref{eq:renorm}.

Because they are GIOs in their own right, we can write different terms involving neutrinos.
The MSSM superpotential we consider becomes
\bea
W_\text{MSSM} & = & W_\text{minimal} + W_\text{neutrinos} ~, \\
W_\text{neutrinos} & = & \sum_{i,j}C^4_{ij} \nu^i\nu^j+\sum_{i,j} C^5_{ij} \nu^i \sum_{\alpha,\beta} L^j_{\alpha} H_\beta \eps ~. \label{eq:neut}
\eea
The two terms in~\eref{eq:neut} give rise to Majorana and Dirac masses for the neutrino.
We do not include linear terms (tadpoles), which can be absorbed via field redefinition, or R-parity violating $\nu^3$ terms.
Adopting a conservative view, we assume that there are exactly three species of active neutrinos.
That is to say, consistent with experiment, we do not include any sterile neutrinos.

\subsection{History and previous work on the problem}
The study of vacuum moduli spaces of supersymmetric theories has a significant pedigree~\cite{Witten:1981nf,Alvarez-Gaume:1981exv,Seiberg:1994bz,Seiberg:1994rs,Intriligator:1995au,Hanany:1996ie}.
The possibility of using the matter content and GIOs to solve for the vacuum moduli space of the MSSM was already considered as early as 1995.
The authors of~\cite{Gherghetta:1995dv} write:
{\small
\begin{quotation}
\noindent\textit{``There do exist mathematical algorithms well-known in algebraic geometry (\textit{e.g.}, the methods of resultant polynomials or Gr\"obner bases) which can in principle be used to decide whether such systems of simultaneous non-linear equations have non-trivial solutions.
Unfortunately, these methods are of no practical use at the level of complexity encountered here... the required number of algebraic operations is demonstrably finite, but quite astronomical.''}
\end{quotation}
}

Our interest in this problem began with~\cite{Gray:2005sr,Gray:2006jb}.
These works established that the vacuum moduli spaces of subsectors of the MSSM have identifiable geometry.
For instance, restricting to the electroweak sector, where, consistent with the unbroken $SU(3)_C$ symmetry in Nature, the vevs of the squarks are set to zero \textit{ab initio}, the vacuum moduli space is an affine cone over the Veronese surface, which is $\mathbb{P}^2 \hookrightarrow \mathbb{P}^5$.
(This structure is further discussed in~\cite{He:2014loa}.)
Moreover, it was observed that these geometric features persist when we incorporate higher order non-renormalizable terms into the superpotential that preserve physically desirable properties such as R-parity, which ensures the stability of the proton and leaves a stable lightest supersymmetric particle (LSP) after supersymmetry breaking as a candidate for the dark matter that today comprises some $25\%$ of the energy density of the Universe.
Based on these examples, the authors suggested that geometric structure in the vacuum is a precept for phenomenology and a selection principle for extensions of the Standard Model Lagrangian.
The paper~\cite{He:2014oha} investigated the structure of the vacuum moduli space of the electroweak sector scanning over the number of generations of particles and the number of pairs of Higgs doublets and made the observation that only when there are three generations are the vacuum moduli spaces toric.
The followup~\cite{He:2015rzg} considered the role of R-parity in three generation models in greater detail.
If we turn to the SQCD sector of the theory alone, the vacuum moduli space of the $SU(N)$ theory is Calabi--Yau for arbitrary numbers of flavors and colors, a fact deduced from the palindromicity of the numerator of the associated Hilbert series~\cite{Gray:2008yu}.

Proceeding to the full MSSM,~\cite{Xiao:2019uhh} employs the plethystic program~\cite{Feng:2007ur} to generate Hilbert series in the case where the superpotential is set to zero.\footnote{
Standard Model flavor invariants are similarly calculated and discussed in~\cite{Hanany:2010vu}.}
In the following sections, we deduce the vacuum moduli space of the full MSSM with the superpotential~\eref{eq:renorm} using Gr\"obner basis methods.
Just as~\cite{Gherghetta:1995dv} noted, this is indeed an astronomical task, but one that is in fact tractable using Gr\"obner basis methods from computational algebraic geometry. The key is to exploit a very special algebraic aspect of the problem: a multigraded structure. 

\section{The components of the master space}\label{sec:three}


The superpotential we use is~\eref{eq:renorm}:
\bea\label{renorm}
W_{\rm minimal} = \; C^0 \sum_{\alpha, \beta} H_\alpha \barH_\beta \eps + \sum_{i,j} C^1_{ij} \sum_{\alpha, \beta, a} Q^i_{a,\alpha} u^j_a H_\beta \eps \nonumber \\
 +\sum_{i,j} C^2_{ij} \sum_{\alpha, \beta, a} Q^i_{a,\alpha} d^j_a \barH_\beta \eps + \sum_{i,j} C^3_{ij} e^i \sum_{\alpha, \beta} L^j_{\alpha} \barH_\beta \eps \; .
\eea
In the initial part of the analysis, we do not consider the singlet right handed neutrinos in~\eref{eq:neut}; we add neutrinos in \S 6.

\subsection{Changing basis via the Yukawa couplings}
A key simplification is to rewrite the superpotential using the following variables:
\begin{equation}
  \tilde{e}^i := \sum_j C^3_{ji} e^j \;, 
  \qquad \tilde{u}^i_a := \sum_j C^1_{ij} u^j_a \;, 
  \qquad \tilde{d}^i_a := \sum_j C^2_{ij} d^j_a \;.
\end{equation}
By scaling variables, we may assume that $C^0 = 1$.
We also assume as in~\cite{Gherghetta:1995dv} that these coupling matrices $C^1$, $C^2$, and $C^3$ are all invertible.
The resulting superpotential in these variables is
\bea\label{renorm}
W_{\rm minimal} = \; \sum_{\alpha, \beta} H_\alpha \barH_\beta \eps + \sum_{i} \sum_{\alpha, \beta, a} Q^i_{a,\alpha} \tilde{u}^i_a H_\beta \eps \nonumber \\
 +\sum_{i} \sum_{\alpha, \beta, a} Q^i_{a,\alpha} \tilde{d}^i_a \barH_\beta \eps + \sum_{i}  \tilde{e}^i \sum_{\alpha, \beta} L^i_{\alpha} \barH_\beta \eps \;.
\eea
The transformation $(C^1, C^2, C^3) \in GL(3, \CC) \times GL(3, \CC) \times GL(3, \CC)$ gives a linear action on $\CC^{49}$.

The next proposition shows that the resulting change of variables does indeed preserve the linear span of each of the $28$ GIO types appearing in Appendix~\ref{app:GIOs}, and therefore we may change variables to the tilde versions, and the image will be linearly equivalent to the image under the original map.

\begin{prop}
The group $GL(3, \CC) \times GL(3, \CC) \times GL(3, \CC)$ acts linearly on the span of each of the $28$ GIO types appearing in Appendix~\ref{app:GIOs}.
\end{prop}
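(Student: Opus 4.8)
The plan is to exploit the fact that the substitution $u \mapsto \tilde u$, $d \mapsto \tilde d$, $e \mapsto \tilde e$ touches only the flavor indices $i,j \in \{1,2,3\}$, and leaves untouched both the color index $a$ and the weak index $\alpha$, as well as the fields $Q$, $L$, $H$, $\barH$ (and, later, $\nu$). Since every GIO is gauge invariant by construction, a transformation that never acts on a gauge index automatically preserves gauge invariance; the entire content of the proposition is therefore a statement purely about flavor space, and the gauge structure may be ignored throughout.

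First I would record the elementary observation that each of the $28$ GIO types is multilinear in its constituent fields, i.e.\ linear in each field that appears. Inverting the defining relations gives $u^j_a = \sum_i ((C^1)^{-1})_{ji}\,\tilde u^i_a$, and similarly for $d$ and $e$. Substituting these linear expressions into a GIO of a fixed type and using multilinearity, one obtains a linear combination of GIOs of the \emph{same} type, now expressed in the tilde variables, with coefficients polynomial in the entries of $(C^1)^{-1}$, $(C^2)^{-1}$, $(C^3)^{-1}$. The crucial point is that both the field content and the pattern of gauge-index contractions --- which is exactly the data defining a GIO ``type'' --- are left unchanged by a substitution that is block-diagonal in field species and never mixes color or weak indices. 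Hence the image lands inside the linear span of that single type, which establishes both invariance of the span and linearity of the induced map.

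To organize this cleanly I would identify the span of each GIO type with a tensor product of flavor representation spaces of $GL(3,\CC) \times GL(3,\CC) \times GL(3,\CC)$: each slot occupied by a $u$, $d$, or $e$ contributes a copy of the standard representation (or its dual) of the corresponding $GL(3,\CC)$ factor, while slots occupied by the untouched fields contribute trivial factors. The substitution then acts as the associated tensor product of standard representations, which is manifestly linear and preserves the span. When a transformed species appears more than once in a type, or appears antisymmetrized because a color contraction $\epsilon^{abc}$ forces it (as in the baryonic operators built from the antifundamentals $u$ and $d$), the relevant summand is a symmetric or exterior power of the standard representation; since these are $GL(3,\CC)$-subrepresentations, they too are preserved.

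The only genuine bookkeeping, and the place I would be most careful, is closure type by type: that no GIO type is mapped into the span of a different type, and that the symmetry type in repeated flavor slots is respected. The first is immediate once one notes that the substitution cannot convert one contraction pattern into another, since it acts only within the flavor factors of $u$, $d$, $e$. The second reduces to the standard fact that symmetric and exterior powers are subrepresentations, so it suffices to inspect the short list in Appendix~\ref{app:GIOs} and confirm that each type carries one of these standard flavor representations in its $u$-, $d$-, and $e$-slots. I do not expect any conceptual obstacle here; the argument is uniform across the $28$ types, and the apparent combinatorial cost is entirely absorbed by the tensor-product description.
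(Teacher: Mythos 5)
Your proof is correct, but it takes a genuinely different route from the paper. The paper splits the $28$ types into three classes: seven types containing no $u$, $d$, or $e$ (pointwise invariant), six types where $e$ appears linearly and $u$, $d$ are absent (visibly span-invariant), and the remaining fifteen types, which it verifies \emph{computationally}: it acts by a generic $3\times 3$ matrix whose entries generate a function field $K$, computes the linear span of the transformed GIOs of each type, and checks equality with the original span. You replace that computation by a uniform representation-theoretic argument (multilinearity, block-diagonality of the substitution in field species, preservation of the gauge-contraction pattern, and symmetrizers commuting with the diagonal flavor action). Your route is computer-free and explains \emph{why} the spans are invariant; what the paper's computation buys is immunity from exactly the bookkeeping you defer at the end, which is where your one imprecision sits: for $QuQue$ and $uudQdQd$ the antisymmetrization is over pairs of indices that include the untouched $Q$-flavor slots, so the relevant summand is not literally a symmetric or exterior power of the standard representation of the acting group, but the image of a pair-swap antisymmetrizer on a mixed tensor product (trivial factors tensored with standard factors); it is still a subrepresentation because the swap commutes with the diagonal action, but your stated dichotomy needs this amendment. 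Likewise, for types with linear relations among the listed operators (e.g., $QQQQu$, where $54$ listed operators replace $72$ naive contractions), your argument should make explicit that the listed operators of a type span \emph{all} contractions of that index pattern, so that substituted operators land in the listed span. Both are repairs of phrasing rather than of substance.
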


\begin{proof}
There are $28$ cases, which we break into three classes:
\[
\begin{array}{ccl}
\mbox{Case 1:}& &\quad LH \,,~ H\barH \,,~ QQQL \,,~ QQQ\barH \,,~ (QQQ)_4LLH \,,~ (QQQ)_4LH\barH \,,~ (QQQ)_4H\barH\barH \;. \\
\mbox{Case 2:}& &\quad LLe \,,~ L\barH e \,,~ (QQQ)_4LLLe \,,~ (QQQ)_4LL\barH e \,,~ (QQQ)_4L\barH\barH e \,,~ (QQQ)_4\barH\barH\barH e \;.\\
\mbox{Case 3:}& &\quad udd \,,~ QdL \,,~ QuH \,,~
    Qd\barH \,,~ QuQd \,,~ QuLe \,,~ uude \,,~ Qu\barH e \,,~ dddLL \,,~ uuuee \,, \\
    & &\quad QuQue \,,~ QQQQu \,,~ dddL\barH \,,~ uudQdH \,,~ uudQdQd \;.
    \end{array}
    \]
The seven GIO types in Case~1 have no $u, d$ or $e$ in them, hence each individual GIO in each of these types is invariant under this group action.

For Case~2, neither $u$ nor $d$ appear, and since each GIO in these types has a linear factor of $e_i$, for some $i$, each type is visibly invariant under the group action.

For Case~3, we need only check that the action of $C^1$ on $u$, $C^2$ on $d$ and $C^3$ on $e$ leave the linear span of each of the $15$ GIO types invariant.
We check this computationally as follows.
First, we work on each $GL(3,\mathbb{C})$ separately.
Second, we create a base field $K$ which is the fraction field of a polynomial ring whose nine variables are the entries in a $3 \times 3$ matrix.
Then, we compute the linear span of the generators of the ideal consisting of all the GIOs of a given type, where we have acted on the respective $u$, $d$, $e$ by this matrix.
We compare this linear span (over $K$) with the original span and verify that they agree.
\end{proof}

\begin{exm}\rm
The above computation is particularly transparent for the $dddLL$ type,
because the action on the $d$ variables is essentially multiplication of the $3 \times 3$
matrix $D = (d^i_a)$ by the matrix $C^2 = (C^2_{ij})$, and since each $dddLL$ is a product of the determinant of $D$ with a polynomial in the $L$ variables, the action 
multiplies each polynomial by the determinant of the matrix $C^2$, and hence the linear span of the set of $dddLL$ GIOs is invariant.
\end{exm}

\subsection{Equations for the master space}
The master space is the zero set of the ideal $J_W$ generated by the $49$ partial derivatives of $W_{minimal}$.

\begin{exm}\rm
Notice that the variables $\tu^i_a$ only appear in one of the four summations in the superpotential.
Thanks to  the change of variables, the partial derivatives with respect to the $\tu$ variables have a particularly simple form:
\[\left(\!\begin{array}{c}
-H_{2}Q_{1,1,1}+H_{1}Q_{1,1,2}\\
-H_{2}Q_{1,2,1}+H_{1}Q_{1,2,2}\\
-H_{2}Q_{1,3,1}+H_{1}Q_{1,3,2}\\
-H_{2}Q_{2,1,1}+H_{1}Q_{2,1,2}\\
-H_{2}Q_{2,2,1}+H_{1}Q_{2,2,2}\\
-H_{2}Q_{2,3,1}+H_{1}Q_{2,3,2}\\
-H_{2}Q_{3,1,1}+H_{1}Q_{3,1,2}\\
-H_{2}Q_{3,2,1}+H_{1}Q_{3,2,2}\\
-H_{2}Q_{3,3,1}+H_{1}Q_{3,3,2}
\end{array}\!\right) \;.
\]
These equations are the $2 \times 2$ minors involving the first column of the $2 \times 10$ matrix
\[\left(\!\begin{array}{cccccccccc}
H_{1}&Q_{1,1,1}&Q_{1,2,1}&Q_{1,3,1}&Q_{2,1,1}&Q_{2,2,1}&Q_{2,3,1}&Q_{3,1,1}&Q_{3,2,1}&Q_{3,3,1}\\
H_{2}&Q_{1,1,2}&Q_{1,2,2}&Q_{1,3,2}&Q_{2,1,2}&Q_{2,2,2}&Q_{2,3,2}&Q_{3,1,2}&Q_{3,2,2}&Q_{3,3,2}
\end{array}\!\right) \;.
\]
Notice that the partial derivatives with respect to the $\td^i_a$, $\te^i$, and $L^i_\alpha$ variables behave similarly.
It turns out that many of the $49$ equations arise as in this previous example.
\end{exm}
Consider the following matrices
$M_Q, M_{du}, M_H, M_L$:
\[
\begin{array}{{ccl}}
    \vspace{.05in} M_Q & := & \left(\!\begin{array}{ccccccccc}
Q_{1,1,1}&Q_{1,2,1}&Q_{1,3,1}&Q_{2,1,1}&Q_{2,2,1}&Q_{2,3,1}&Q_{3,1,1}&Q_{3,2,1}&Q_{3,3,1} \;,\\
Q_{1,1,2}&Q_{1,2,2}&Q_{1,3,2}&Q_{2,1,2}&Q_{2,2,2}&Q_{2,3,2}&Q_{3,1,2}&Q_{3,2,2}&Q_{3,3,2}
\end{array}\!\right) \;,\\
   \vspace{.05in}  M_{du}& := &
\left(\!\begin{array}{ccccccccc}
\tu_{1,1}&\tu_{1,2}&\tu_{1,3}&\tu_{2,1}&\tu_{2,2}&\tu_{2,3}&\tu_{3,1}&\tu_{3,2}&\tu_{3,3}\\
-\td_{1,1}&-\td_{1,2}&-\td_{1,3}&-\td_{2,1}&-\td_{2,2}&-\td_{2,3}&-\td_{3,1}&-\td_{3,2}&-\td_{3,3}
\end{array}\!\right) \;,\\
      \vspace{.05in} M_H & := & \left(\!\begin{array}{cccc}
H_1 & \barH_1 & H_1& H_2 \\ 
H_2 & \barH_2 & \barH_1 & \barH_2
\end{array}\!\right) \;,\\
M_L & := & \left(\!\begin{array}{ccc}
L^1_1 & L^2_1 & L^3_1 \\ 
L^1_2 & L^2_2 & L^3_2
\end{array}\!\right) \;.
\end{array}
\]

\begin{comp}
The ideal $J_{W_{minimal}}$ defining the master space is generated by:
\begin{itemize}[leftmargin=.5in,itemsep=0in]
\item The six monomials 
$\{ \barH_1 \te^1,\barH_1 \te^2,\barH_1 \te^3,\barH_2 \te^1,\barH_2 \te^2,\barH_2 \te^3 \} \;.$
\item The four forms with a linear term in $H_\alpha$ or $\bar{H}_\alpha$: for each $\alpha \in \{1,2\}$, we have
\[ 
\begin{array}{ccl}
  \barF_\alpha & = & \barH_\alpha - \sum\limits_{i,a} \tu^i_a Q^i_{a,\alpha} \;,\\
  F_\alpha & = & H_\alpha + \sum\limits_{i,a} \td^i_a Q^i_{a,\alpha} + \sum\limits_{i} L^i_\alpha \te^i \;.
  \end{array}
\]
\item Let $(M_H)_{\{i_1,i_2, \ldots i_k\}}$ denote columns $\{i_1,i_2,\ldots i_k\}$ of $M_H$.
The remaining $39$ generators for $J_W$ come from the $2 \times 2$ minors of the following five matrices, but only those minors which involve the first column of each matrix. We use a vertical bar to denote concatenation.

\[
\begin{array}{ccc}
\vspace{.05in}N_1 &=& \Big(M_H\Big)_{1} \ \Big| \ M_Q \;, \\ 
\vspace{.05in}N_2 &=& \Big(M_H\Big)_2 \  \Big| \ M_Q \;, \\
\vspace{.05in}N_3 &=& \Big(M_H\Big)_3 \ \; \Big| \ M_{du} \;, \\
\vspace{.05in}N_4 &=& \Big(M_H\Big)_4 \ \ \Big| \ M_{du} \;, \\
\vspace{.05in}N_5 &=& \Big(M_H\Big)_2 \  \Big| \ M_L \;.
\end{array}
\]
\comment{
    \item 
    \[N_1 = \left(\!\begin{array}{cccccccccc}
H_{1}&Q_{1,1,1}&Q_{1,2,1}&Q_{1,3,1}&Q_{2,1,1}&Q_{2,2,1}&Q_{2,3,1}&Q_{3,1,1}&Q_{3,2,1}&Q_{3,3,1}\\
H_{2}&Q_{1,1,2}&Q_{1,2,2}&Q_{1,3,2}&Q_{2,1,2}&Q_{2,2,2}&Q_{2,3,2}&Q_{3,1,2}&Q_{3,2,2}&Q_{3,3,2}
\end{array}\!\right)
\]
    \item 
    \[N_2 = \left(\!\begin{array}{cccccccccc}
\barH_{1}&Q_{1,1,1}&Q_{1,2,1}&Q_{1,3,1}&Q_{2,1,1}&Q_{2,2,1}&Q_{2,3,1}&Q_{3,1,1}&Q_{3,2,1}&Q_{3,3,1}\\
\barH_{2}&Q_{1,1,2}&Q_{1,2,2}&Q_{1,3,2}&Q_{2,1,2}&Q_{2,2,2}&Q_{2,3,2}&Q_{3,1,2}&Q_{3,2,2}&Q_{3,3,2}
\end{array}\!\right)
\]
     \item 
     \[N_3 = 
\left(\!\begin{array}{cccccccccc}
\barH_{1}&u_{1,1}&u_{1,2}&u_{1,3}&u_{2,1}&u_{2,2}&u_{2,3}&u_{3,1}&u_{3,2}&u_{3,3}\\
H_{1}&-d_{1,1}&-d_{1,2}&-d_{1,3}&-d_{2,1}&-d_{2,2}&-d_{2,3}&-d_{3,1}&-d_{3,2}&-d_{3,3}
\end{array}\!\right)
\]
    \item 
         \[N_4 = 
\left(\!\begin{array}{cccccccccc}
\barH_{2}&u_{1,1}&u_{1,2}&u_{1,3}&u_{2,1}&u_{2,2}&u_{2,3}&u_{3,1}&u_{3,2}&u_{3,3}\\
H_{2}&-d_{1,1}&-d_{1,2}&-d_{1,3}&-d_{2,1}&-d_{2,2}&-d_{2,3}&-d_{3,1}&-d_{3,2}&-d_{3,3}
\end{array}\!\right)
\]
 \item 
      \[N_5 = \left(\!\begin{array}{cccc}
\barH_{1} & L^1_1 & L^2_1 & L^3_1 \\ 
\barH_{2} & L^1_2 & L^2_2 & L^3_2
\end{array}\!\right)
\]
}
\end{itemize}
\end{comp}
The first ingredient to understanding the vacuum moduli space is an analysis of the master space, which is provided by the next theorem.

\begin{thm}\label{IdealMasterSpaceComponents}
The master space has three components $X_1$, $X_2$, and $X_3$, which
\begin{itemize}[leftmargin=.5in,itemsep=0in]
\item are reduced and irreducible,
\item are of dimensions $23$, $27$, $41$, respectively,
\item are defined by prime ideals $I_i$ described explicitly below.
\end{itemize}
\end{thm}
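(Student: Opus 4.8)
The plan is to read the irreducible decomposition directly off the generators exhibited in the preceding Computation, exploiting that they fall into three structurally different families: the six bilinear monomials $\barH_\alpha\te^i$; the four forms $F_\alpha,\barF_\alpha$, each of which contains exactly one of the Higgs coordinates $H_\alpha,\barH_\alpha$ linearly, with a bilinear remainder; and the $39$ maximal minors recording rank-one (collinearity) conditions on $N_1,\dots,N_5$. The first step is to use $F_\alpha,\barF_\alpha$ to eliminate the four Higgs coordinates: since each of these forms is monic in a distinct variable among $H_1,H_2,\barH_1,\barH_2$, the master space $X$ is the graph over its projection to the $\CC^{45}$ with coordinates $Q,\tu,\td,L,\te$, obtained by substituting $\barH_\alpha=\sum_{i,a}\tu^i_aQ^i_{a,\alpha}$ and $H_\alpha=-\sum_{i,a}\td^i_aQ^i_{a,\alpha}-\sum_iL^i_\alpha\te^i$. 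This substitution is an isomorphism onto the image, so components and dimensions are preserved, and every remaining condition becomes a statement about collinearity of, or vanishing products among, vectors in the $SU(2)$ doublet plane $\CC^2$ — exactly the incidence-of-Segre-varieties picture promised in the abstract.

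Next I would make the reducibility explicit through two disjunctions. The monomials give $V(\barH_\alpha\te^i)=V(\barH_1,\barH_2)\cup V(\te^1,\te^2,\te^3)$, so on each component either $\barH=0$ or $\te=0$; and the minors of $N_3,N_4$, which read $-H_\alpha\td^i_a-\barH_\alpha\tu^i_a=0$, split the locus $\{\barH=0\}$ further into $\{H=0\}$ and $\{\td=0\}$. The cleanest outcome is the largest component, which I expect to be
\[
X_3=V\big(H_1,H_2,\barH_1,\barH_2,\ \barF_1,\barF_2,F_1,F_2\big),
\]
where on $\{H=\barH=0\}$ the four forms reduce to the bilinear quadrics $\sum_{i,a}\tu^i_aQ^i_{a,\alpha}$ and $\sum_{i,a}\td^i_aQ^i_{a,\alpha}+\sum_iL^i_\alpha\te^i$ and every minor vanishes identically; this is visibly a complete intersection of four linear forms and four quadrics, of dimension $49-8=41$, matching the stated description. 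For the remaining two components I would work on strata where one Higgs doublet is nonzero, so the relevant minors force honest collinearity (all $q_{ia}$ proportional to $H$, all $\ell_i$ proportional to $\barH$, and all $(\tu^i_a,-\td^i_a)$ proportional to $(H_\alpha,\barH_\alpha)$), write the surviving equations as candidate prime ideals $I_1,I_2$, and prove irreducibility by exhibiting explicit rational parametrizations of $V(I_1)$ and $V(I_2)$ — choosing the common doublet directions in $\PP^1$ and the proportionality scalars as free parameters — which simultaneously yield unirationality (hence irreducibility), primality of the saturated ideal, and the dimensions by counting parameters modulo the evident $\CC^*$ rescalings.

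To finish I would verify the three assertions. Reducedness and irreducibility of each $X_i$ follow from the parametrizations (the coordinate rings are domains); the dimensions $23,27,41$ are read off those parametrizations; and the claim that these are all the components reduces to the inclusions $J_W\subseteq I_1\cap I_2\cap I_3$ (substitute each parametrization into the listed generators) and $V(J_W)\subseteq V(I_1)\cup V(I_2)\cup V(I_3)$ (the exhaustive output of the case tree above). Pairwise non-containment is immediate from the distinct generic behaviour ($H=\barH=0$ on $X_3$ versus a nonzero Higgs on $X_1,X_2$) together with the distinct dimensions of $X_1$ and $X_2$.

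The hard part will be the exhaustiveness and the exact component structure rather than any single irreducibility check. Naively stratifying by the disjunctions above tends to produce non-maximal pieces whose dimensions are those of the pairwise intersections ($21,22,26$ in the figure) rather than of the components themselves, so the delicate work is to see that the correct top-dimensional closures glue these strata into exactly three prime components — with $\{\barH=0,\td=0\}$, for instance, lying inside $\overline{X_2}$ rather than forming a component of its own — and to prove the radical statement $J_W=I_1\cap I_2\cap I_3$ rather than merely an equality of zero sets. This is precisely where I would invoke the multigraded (torus-by-field-type) structure emphasized in the introduction: the ideal is homogeneous for this grading, so both the primary decomposition and the saturations certifying primality can be organized and verified degree block by degree block via Gr\"obner bases, with the explicit parametrizations serving as the independent certificate that the computed primes are the claimed $I_i$.
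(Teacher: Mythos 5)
Your plan fails at the step you yourself flag as critical --- exhaustiveness of the case tree --- and the failure is concrete. The $39$ minors coming from $N_3,N_4$ are the partials $\partial W/\partial Q^i_{a,\alpha}$, and since $W_{\rm minimal}$ couples $Q\,\tu\,H$ and $Q\,\td\,\barH$, these equal $\pm\bigl(\tu^i_a H_\beta + \td^i_a \barH_\beta\bigr)$: they pair $H$ with $\tu$ and $\barH$ with $\td$, not the other way around. (The displayed $M_{du}$ in the Computation has its rows interchanged relative to the actual derivatives; you adopted the display instead of recomputing, and your reading is inconsistent both with the paper's $I_2$, which contains all the $\tu^i_a$, and with \S4, where every GIO type surviving on $X_2$ involves $d$ but never $u$.) Consequently, on $\{\barH=0\}$ the correct disjunction is $\{H=0\}\cup\{\tu=0\}$, and the middle component is the closure of the stratum $\{\barH=0,\ \tu=0,\ H\neq 0\}$, of dimension $2+9+9+6+3-2=27$. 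Your branch $\{\barH=0,\ \td=0\}$ is \emph{not} contained in $X_2$: executing your tree as written produces, as middle candidate, the closure of $\{\barH=0,\ \td=0,\ Q^i_{a,\alpha}=\lambda^i_a H_\alpha,\ H_\alpha=-\sum_i L^i_\alpha\te^i,\ \sum_{i,a}\tu^i_a\lambda^i_a=0\}$, which has dimension $9+9+8=26$, and whose generic point (with $\te\neq 0$, $H\neq 0$, $\tu\neq 0$) lies in none of your three candidate components. So your candidates have the wrong dimensions and do not even cover the zero locus of the generators you started from; the decomposition cannot close up until the $\tu\leftrightarrow\td$ pairing is repaired.

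The second gap is that the theorem asserts all three $I_i$ are prime, but you argue irreducibility only for $X_1,X_2$ (via parametrizations); for $X_3$ you say only ``visibly a complete intersection of dimension $41$,'' and your closing claim that reducedness and irreducibility ``follow from the parametrizations'' is vacuous there, since you propose no parametrization of $X_3$. A complete intersection can be reducible and nonreduced, so this is a genuine missing step. It is fillable: the four quadrics of $I_3$ are \emph{linear} in the $21$ variables $\tu,\td,\te$ with coefficient matrix depending on $(Q,L)$, so $X_3$ fibers over $\CC^{24}$ with generic fiber $\CC^{17}$, giving irreducibility and dimension $41$ (equidimensionality of the complete intersection rules out components over the degeneracy locus), and reducedness follows from generic reducedness plus Cohen--Macaulayness --- but none of this is in your proposal. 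For comparison, the paper does not stratify at all: it posits the three prime ideals, certifies $J_W=I_1\cap I_2\cap I_3$ and the dimensions by a direct Gr\"obner computation, and proves primality of $I_2$ by reducing $N_1$ to a determinantal ideal with independent first column (close in spirit to your parametrization idea), leaving the remaining primality checks to the computer. Your final certification step is that same computation, so once the two gaps above are repaired, your route and the paper's essentially converge.
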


\begin{proof}
Define ideals $I_1, I_2, I_3$ as below:
\[
\begin{array}{ccl}
\vspace{.2in} I_1 & = & I_{2 \times 2}(M_1) + I_{2 \times 2}(M_2) + \langle \te^1, \te^2, \te^3 \rangle + \langle F_1, F_2, \barF_1, \barF_2 \rangle \,, \mbox{ where}\\
\vspace{.2in} & & M_1 = \Big(M_H\Big)_{1,2} \ \Big| \ M_Q  \ \Big| \ M_L  \,, \mbox{ and}\\ 
 \vspace{.2in}& &M_2 = \Big(M_H\Big)_{3,4} \  \Big| \ M_{du} \,.\\
\vspace{.25in} I_2 & = & I_{2 \times 2}(N_1)  + 
\langle \barH_1, \barH_2 \rangle +
\langle \tu^i_a \rangle_{i,a} + 
\langle F_1, F_2 \rangle \,.\\
I_3 & = & 
\langle H_1, H_2, \barH_1, \barH_2 \rangle +
\langle F_1, F_2, \barF_1, \barF_2 \rangle \,.
\end{array}
\]

A direct computation shows that 
\[ 
J_W = I_1 \cap I_2 \cap I_3 \,,
\]
and that the dimensions are as described. It is possible to prove that the $I_i$ are prime ideals directly.
For example, for the ideal $I_2$, we quotient by the $\barH$ and $\tilde{u}$ variables. Vanishing of the $F_i$ allows us to replace the $H_i$ in the first column of $N_1$ with two bihomogeneous polynomials of type $dQ+Le$. Now simplify the resulting matrix using the $M_Q$ columns to further reduce $N_1$, to the matrix
    \[
\Big( \sum\limits_{i} L^i_1 \te^i, \sum\limits_{i} L^i_2 \te^i \Big)^T \Big| M_Q \,.
      \]
As the first column consists of two independent forms in variables distinct from the $Q$ variables, the resulting ideal of $2 \times 2$ minors is prime. 

Primality may also be verified computationally; this completes our analysis of the components of the master space.
\end{proof}
\comment{
\item For $I_3$, quotienting out the $H$ and $\barH$ variables reduces the remaining four generators to
\[ 
\begin{array}{l}
 \sum\limits_{i,a} \tu^i_a Q^i_{a,\alpha} \\
 \sum\limits_{i,a} \td^i_a Q^i_{a,\alpha} + \sum\limits_{i} L^i_\alpha \te^i, \mbox{ where }\alpha \in \{ 1,2 \}.
  \end{array}
\]
Note that for the bottom set of two equations, each contains a variable not arising in any of the other equations (the $L^i_1, L^i_2$).
\end{itemize}

To prove that the dimensions are as claimed, we need the fact~\cite{Eisenbud} that an ideal generated by the $2 \times 2$ minors of a generic $2 \times m$ matrix has codimension $m-1$. With this in hand, we compute: \newline
$\bullet$ For $I_1$, \newline
$\bullet$ For $I_2$, quotienting by the $\tilde{u}$ and $\barH$ drops dimension by $9+2=11$, and killing the $H$ drops dimension by two more. The $2\times 2$ minors of $I_{2 \times 2}(N_1)$ drop the dimension by another $9$ from our remarks above (note that the variables involved are distinct, so codimension behaves additively). Therefore $codim(I_2) = 9+2+2+9 =22$ and the dimension of $V(I_2)$ as an affine variety is $27$. \newline
$\bullet$ For $I_3$, the $H$ and $\barH$ variables yield codimension of four, and the remaining four equations have enough independent variables to increase the codimension four more. Hence $V(I_3)$ is of dimension $41$, and in fact is a complete intersection.}

\section{The components $M_1$ and $M_2$ of the vacuum moduli space}\label{sec:four}

We first analyze the image of the  components $X_1$ and $X_2$ of the master space.
It turns out on these two components that there are relatively few non-zero GIO types.
In contrast, for $X_3$ the majority of the GIO types are non-zero, making it far more difficult to analyze, so we defer that analysis to \S\ref{sec:five}. 

\subsection{The image $M_1$ of component $X_1$}

Recall from the previous section that the ideal of $X_1$ is defined by
\[ I_{2 \times 2}(M_1) + I_{2 \times 2}(M_2) + \langle \te^1, \te^2, \te^3 \rangle + \langle F_1, F_2, \barF_1, \barF_2 \rangle \,.
\]

\begin{prop}
The image $M_1$ under all $973$ GIOs is a line.
More specifically
\begin{enumerate}[leftmargin=.5in]
\item
Every GIO type  except $udd$ vanishes on the zero set of
\[ I_{2 \times 2}(M_1)  + \langle \te^1, \te^2, \te^3 \rangle \,,
\]
and consequently on the component $X_1$.
\item The image of $X_1$ under the $udd$ type has image a line in $\CC^9 \subset \CC^{973}$.
In fact, since six of the nine $udd$ GIOs vanish on $X_1$, the line lies in a $\CC^3$.
\end{enumerate}
\end{prop}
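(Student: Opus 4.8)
The plan is to exploit the rank-one structure that $I_{2 \times 2}(M_1)$ forces on the $SU(2)_L$ doublets, together with the vanishing of $\te$ (equivalently $e$), to collapse all but one GIO type, and then to evaluate the surviving type $udd$ directly using $I_{2\times 2}(M_2)$. First I would set up coordinates on the relevant part of $X_1$. The $2\times 2$ minors of $M_1$ say that the columns of $M_1$ — namely $(H_1,H_2)$, $(\barH_1,\barH_2)$, each column $(Q^i_{a,1},Q^i_{a,2})$ of $M_Q$, and each column $(L^i_1,L^i_2)$ of $M_L$ — all lie on a common line, so every doublet field $A \in \{Q,L,H,\barH\}$ factors as $A_\alpha = a\, v_\alpha$ for a single $v=(v_1,v_2)\in\CC^2$ and a scalar $a$. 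The generator $\langle \te^1,\te^2,\te^3\rangle$ forces $\te = 0$, which since $C^3$ is invertible is the same as $e=0$.

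The engine for part~(1) is a one-line vanishing lemma: for any two doublet factors $A,B \in \{Q,L,H,\barH\}$ one has $A_\alpha B_\beta\,\eps = a\,b\,v_\alpha v_\beta\,\eps = 0$, because $v_\alpha v_\beta$ is symmetric while $\eps$ is antisymmetric. Every GIO is $SU(2)_L$-invariant, so all of its doublet indices are contracted pairwise by copies of $\eps$; hence any GIO containing at least one doublet factor contains at least one such contraction and vanishes on $Z(I_{2\times 2}(M_1))$. The only types built purely from the singlets $u,d,e$ are $udd$, $uude$, and $uuuee$, the latter two carrying a factor of $e$. Running down the list in Appendix~\ref{app:GIOs}, this disposes of all $27$ types other than $udd$: the seven Case~1 types and every doublet-containing Case~3 type die by the lemma, while every Case~2 type together with $uude$ and $uuuee$ die because they contain $e$. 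Since $I_1 \supseteq I_{2\times 2}(M_1) + \langle \te^1,\te^2,\te^3\rangle$, all of these vanishings hold on $X_1$.

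For part~(2) I would compute the $udd$ image using the \emph{other} minor ideal, $I_{2\times 2}(M_2)$, which is the only constraint involving $u,d$; by the change-of-basis proposition it suffices to evaluate the type in the tilde variables. The minors of $M_2$ force $(H_1,\barH_1)$, $(H_2,\barH_2)$, and every column $(\tu^i_a,-\td^i_a)$ to be proportional to a common $w=(w_1,w_2)$, so $\tu^i_a = s^i_a w_1$ and $\td^i_a = -s^i_a w_2$ for a $3\times 3$ matrix $S=(s^i_a)$. Substituting into the color-antisymmetrized type $\epsilon^{abc}\tu^i_a \td^j_b \td^k_c$ yields $w_1 w_2^2\, \epsilon^{abc} s^i_a s^j_b s^k_c$, which vanishes unless $i,j,k$ are distinct and otherwise equals $\pm\, w_1 w_2^2 \det S$. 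Since in a $udd$ GIO the two $d$-flavors $\{j,k\}$ are antisymmetrized and $i$ is the $u$-flavor, exactly the three components with $i \notin \{j,k\}$ survive — explaining why six of the nine vanish — and all three equal $w_1 w_2^2 \det S$ up to sign. Thus the image lands on the single line spanned by $(1,-1,1)$ inside the $\CC^3 \subset \CC^9$ cut out by these three coordinates.

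To upgrade ``contained in a line'' to ``is the line,'' I would use that $X_1$ is irreducible (Theorem~\ref{IdealMasterSpaceComponents}), so its image under the $udd$ type is an irreducible subvariety of a line, hence a point or the whole line; exhibiting one point of $X_1$ with $\det S \neq 0$ and $w_1 w_2^2 \neq 0$ — easy to write down, e.g.\ $S=\Id$ with a single nonzero $Q$ entry chosen to solve $F_\alpha = \barF_\alpha = 0$ — shows the type is not identically zero, so the image is the line. The main obstacle is not any single computation but the bookkeeping in part~(1): one must check that \emph{every} non-$udd$ type genuinely falls under one of the two mechanisms, the presence of an $e$ or of a doublet $\eps$-contraction, and it is precisely the organization of the GIO list into the three cases that makes this verification clean.
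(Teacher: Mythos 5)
Your proof is correct, but it takes a genuinely different route from the paper's. The paper's own proof is a one-line appeal to computation: each GIO type is reduced modulo $I_{2 \times 2}(M_1) + \langle \te^1,\te^2,\te^3\rangle$ in \textit{Macaulay2}, which certifies that all types except $udd$ reduce to zero, that six of the nine $udd$s vanish on $X_1$, and that the remaining three are scalar multiples of one another. You replace the computer certificate with a structural argument: the minors of $M_1$ force every $SU(2)_L$ doublet to be proportional to a single vector $v$, so every invariant contraction of two doublets dies by the cancellation $v_\alpha v_\beta \eps = 0$; the only doublet-free types are $udd$, $uude$, $uuuee$, and the last two vanish with $e$ (using invertibility of $C^3$). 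For part (2), your rank-one parametrization $\tu^i_a = s^i_a w_1$, $\td^i_a = -s^i_a w_2$ from $I_{2\times 2}(M_2)$ evaluates each tilde-variable $udd$ as $\pm w_1 w_2^2 \det(S)$, thereby explaining, not merely recording, which six vanish and why the survivors are proportional. The paper's approach buys brevity and uniformity with its overall methodology; yours buys software-independence, an actual reason for the vanishing, and a slightly stronger conclusion, since you verify the image is the whole line (via irreducibility of $X_1$ plus a witness point) rather than merely contained in one. Two minor points: the ``six of nine vanish'' claim is, as you correctly flag, a statement in the tilde variables (in the original variables the image is only linearly equivalent to this line); and your witness point needs slightly more care than stated, since a flavor-color-diagonal $Q$ entry forces $w_1 = \pm w_2$ through the $M_2$-proportionality of $(H_1,\barH_1)$, while an off-diagonal (traceless) choice, giving $H = \barH = 0$, works for every $w$ --- either way such a point exists, so this gap is cosmetic.
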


\begin{proof}
Part (1) is shown by a \textit{Macaulay2} computation, and that all but three of the nine $udd$s on $X_1$ vanish.
The same computation shows that the remaining three $udd$s are all scalar multiples of each other, so define a line in the image space.
\end{proof}

\subsection{The image $M_2$ of component $X_2$}
The analysis of $M_2$ is more complicated than that of $X_1$, and requires new methods. Recall that by Theorem~\ref{IdealMasterSpaceComponents}, the ideal $I_2$ of $X_2$ is defined by
\[\begin{array}{ccc}
\vspace{.05in} I_2 & = & I_{2 \times 2}(N_1)  + 
\langle \barH_1, \barH_2 \rangle +
\langle \tu^i_a \rangle_{i,a} + 
\langle F_1, F_2, \barF_1, \barF_2 \rangle \,.\\
\end{array}
\]
By reducing the $28$ GIO types with respect to the ideal $I_2$, we find that the only nonzero GIO types on $X_2$ are
\[LH \,,~ LLe \,,~ QdL \,,~ dddLL \,,~ (QQQ)_4LLLe \,.\]
Furthermore, since $\barH_\alpha$ vanishes on $X_2$, as do the $\tu^i_a$, the three LH elements may be written in terms of the $QdL$ and $LLe$ GIOs, so up to a change of coordinates, these three elements are not needed in the map to $\CC^{973}$.
This leaves $69$ non-vanishing GIOs.

\begin{thm}
    The image $M_2$ of $X_2$ under the map $\phi$ has dimension $15$, is rational, and sits inside a linear space $\CC^{69} \subset \CC^{973}$.
    Furthermore, we compute explicit generators for the ideal of the image in $\CC^{69} \subset \CC^{973}$.
\end{thm}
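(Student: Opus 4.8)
The plan is to parametrize $X_2$ rationally, read off the determinantal (Segre) structure of the surviving GIOs, obtain the dimension from the quotient description $M_2 = X_2/G^c$, and prove rationality by an explicit birational inverse; the defining ideal is then a Gr\"obner elimination. The ambient statement is immediate from the reduction preceding the theorem: modulo $I_2$ the only nonzero GIO types are $LH$, $LLe$, $QdL$, $dddLL$, and $(QQQ)_4LLLe$, and the three $LH$ coordinates are rational combinations of the others, so forgetting them places $M_2$ in the coordinate subspace $\CC^{69}\subset\CC^{973}$.

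First I would parametrize $X_2$. On the dense locus $H\neq 0$, the minors $I_{2\times 2}(N_1)$ force each column of $M_Q$ to be proportional to $(H_1,H_2)$, so $Q^i_{a,\alpha}=q^i_a H_\alpha$ for scalars $q^i_a$, and $F_1=F_2=0$ solves for $H_\alpha$ rationally in $(q,\td,\te,L)$ via $H_\alpha(1+\sum_{i,a}q^i_a\td^i_a)=-\sum_i L^i_\alpha\te^i$. This gives a birational map $\CC^{27}\dashrightarrow X_2$. Substituting into the $69$ GIOs, they assemble from the blocks $A^{ij}=\sum_a q^i_a\td^j_a$, $P^k=\epsilon^{\alpha\beta}H_\alpha L^k_\beta$, $W^{ij}=\epsilon^{\alpha\beta}L^i_\alpha L^j_\beta$, $\te^k$, $\det q$, and $\det\td$: one finds $QdL^{ijk}=A^{ij}P^k$ and $LLe^{ij,k}=W^{ij}\te^k$ are rank-one (Segre) tensors, $dddLL^{ij}=\det(\td)\,W^{ij}$, and $(QQQ)_4LLLe\propto\det(q)\,P^{(j_1}P^{j_2}P^{j_3)}\te^l$, while the Pl\"ucker relation among the $2\times 2$ minors of $(H\mid L^1\mid L^2\mid L^3)$ ties $P$ to $(W,\te)$.

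For the dimension I would use $M_2=X_2/G^c$ with $G^c=SL(3,\CC)\times SL(2,\CC)\times\CC^*$ of dimension $12$: on $X_2$ the color $SL(3)$ acts freely on a generic invertible $q$, the weak $SL(2)$ is rigidified by the independent doublets $H$ and $L^1$, and $\CC^*$ acts with nonzero weights, so the generic stabilizer is finite and $\dim M_2=27-12=15$. I would confirm this by checking that the Jacobian of the parametrized GIO map has rank $15$ at a generic point in \emph{Macaulay2}.

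Rationality I would prove by an explicit transcendence basis rather than abstract quotient theory, which only gives \emph{stable} rationality (the $SL_n$ and $\CC^*$ are special, so $X_2\sim M_2\times G^c$). After gauge-fixing $SL(3)\times SL(2)$, the invariants are $A^{ij}$ (nine, weight $0$), $W^{ij}$ and $\det q$ (weight $-1$), $\te^k$ and $\det\td$ (weight $+1$), subject only to $\det A=\det q\,\det\td$ --- sixteen $SL(3)\times SL(2)$-invariants. Since the residual hypercharge $\CC^*$ acts with weights of both signs, its invariant field is purely transcendental of degree $16-1=15$, so $\CC(M_2)$ is rational of the expected dimension; concretely one reconstructs a generic orbit from its GIOs --- $A$ from ratios of $QdL$, $(W,\te)$ from $LLe$ up to one scale fixed by $dddLL$, and $\det q$ from $(QQQ)_4LLLe$ --- showing the GIO map is birational onto $M_2$. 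Finally, the ideal of $M_2\subset\CC^{69}$ follows from the algorithm of \S\ref{sec:algorithm}: impose $y_J=O_J(q,\td,\te,L)$ and eliminate the $27$ parameters by a Gr\"obner basis, using the hypercharge and flavor multigrading to keep this tractable. The main obstacle is the rationality step: because the abstract argument yields only stable rationality, one must invert the GIO map explicitly, and the Segre rank-one ambiguities together with the Pl\"ucker incidence tying $P=(HL)$ to $(W,\te)$ must be untangled to recover the gauge orbit; the elimination is large but is controlled by the multigrading.
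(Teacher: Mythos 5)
Your proposal follows the same computational backbone as the paper but establishes the two substantive claims (dimension and rationality) by a genuinely different route. The paper also works on the open set where $H\neq 0$, observes that every GIO surviving on $X_2$ factors through exactly your twenty building blocks (in its notation $DQ_{i,j}$, $tL_i$, $\Delta_i$, $\te^i$, $\det D$, $\det Q$, matching your $A^{ij}$, $P^k$, $W^{ij}$, $\te^k$, $\det\td$, $\det q$), and computes the image ideal by elimination; but it then gets dimension $15$ and rationality from an explicit geometric model: after projecting away the $(QQQ)_4LLLe$ coordinates (which are rational functions of the rest), $M_2$ is birational to a variety $\cV_2\subset\CC^{39}$ shown to be the cone over a smooth $12$-dimensional incidence correspondence $Y\subset\Sigma_{2,2}\times\Sigma_{2,8}\times\PP^2$ cut out by five quadrics. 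You instead compute $\CC(X_2)^{G^c}$ directly: first-fundamental-theorem invariants for $SL(3)\times SL(2)$ (your $A$, $W$, $\det q$, $\det\td$, $\te$ subject to $\det A=\det q\,\det\td$), then invariants of the residual $\CC^*$ acting with weights of both signs, plus an orbit-reconstruction step showing the GIOs separate generic orbits so that $\CC(M_2)=\CC(X_2)^{G^c}$. Both routes are viable; yours is more conceptual and avoids discovering the incidence model, while the paper's buys an explicit classical-geometry description of $M_2$ (including smoothness away from the vertex), which is one of its stated goals.

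Three points need repair. First, the $LH$ GIOs restricted to $X_2$ are \emph{linear} (constant-coefficient) combinations of the $QdL$ and $LLe$ GIOs --- substitute $H_\alpha=-\sum_{j,a}\td^j_aQ^j_{a,\alpha}-\sum_j L^j_\alpha\te^j$ into $L^i_\alpha H_\beta\eps$; your weaker claim that they are ``rational combinations'' would not place $M_2$ inside a linear $\CC^{69}\subset\CC^{973}$, only give a birational projection, so linearity must be stated and verified. Second, your dimension count $\dim M_2=27-12$ from finiteness of the generic stabilizer is not by itself valid for an affine GIT quotient: $\CC^*$ acting on $\CC^2$ with weights $(1,1)$ has trivial generic stabilizer but a point as quotient. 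One needs generic orbits to be closed (or, equivalently here, that the regular invariants separate generic orbits); this is exactly what your reconstruction step provides, so that step should be promoted from an afterthought to the core of the argument --- it simultaneously yields the dimension and the identification $\CC(M_2)=\CC(X_2)^{G^c}$ needed for rationality, with the Jacobian-rank computation as an independent check. Third, the relation tying $P$ to $\te$ is not a Pl\"ucker relation; it is the F-term identity $\sum_k P^k\te^k=0$ (equivalently $P^k=-\sum_m\te^m W^{mk}/(1+\operatorname{tr}A)$), while Pl\"ucker ties $P$ only to $W$ --- both relations are needed and both appear among the paper's equations for the image in $\CC^{20}$.
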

The diagram below is a roadmap of the high level steps in our analysis, showing the maps and describing the spaces that will be the key players. 
\begin{center}
    \begin{tikzcd}
      & \CC^{49} \arrow[rrr, "\phi"] & & & \CC^{973} \\
      & \CC^{38} \arrow[u, hook]
      &
      & &  \CC^{69} \arrow[u, hook] \\
      & X_2 \arrow[rrr, bend left, "\phi"] \arrow[u, hook] & & & M_2 \arrow[u, hook] \\
      \CC^{29} \supset U_2 \arrow[ur, hook, "f"] \arrow[urrrr, dotted] \arrow[dr,  rightarrow, "g"] \\
      & \CC^{20}  \arrow[rrruu, twoheadrightarrow, "h"] & &
    \end{tikzcd}
\end{center}

\begin{itemize}
\item The $\CC^{38}$ is a reflection of the fact that the $11$ variables $\tu^i_a$ and $\barH_\alpha$ vanish on $X_2$.
\item The subset $U_2 \subset X_2$ is the open affine subset where $H_1 \ne 0$.
On this open set, setting $t = H_2/H_1$, then $H_2 = t\, H_1$. 
Since the ideal of $X_2$ contains $I_{2 \times 2}(N_1)$, on $U_2$ we also have $Q_{i,j,2} = t\, Q_{i,j,1}$, for all $i,j$.
So $U_2$ is (isomorphic to) a subvariety of $\CC^{29}$, with the $29$ variables 
\[ t \,,\ Q_{i,j,1} \,\ L^i_{\alpha} \,\ \te^i \,\ H_\alpha \,\ \td^i_a \,. \]
In particular, the $Q_{i,j,2}$ become redundant in describing $U_2$.
\item The equations of $U_2$ in $\CC^{29}$ are:
\[\begin{array}{rcl}
  F_1 & = & H_1 + \sum\limits_{i,a} \td^i_a Q^i_{a,1} + \sum\limits_{i} L^i_1 \te^i \,, \\
  F_2 - tF_1 & = & 
  \sum\limits_i (t\, L^i_1 - L^i_2)\, \te^i \,.
\end{array}\]
This visibly defines $U_2$ as a complete intersection in $\CC^{29}$, or, by eliminating $H_1$, as a hypersurface in $\CC^{28}$.
\item The next step is to factor the GIO map from $U_2$ to $M_2$ through $\CC^{20}$ in the diagram.
The $\CC^{20}$ arises from the fact that the GIOs factor on $U_2$ in a nice way, in terms of the following polynomials:
\begin{itemize}
    \item The three $\te^i$'s.
    \item The three $2 \times 2$ minors $\Delta_i$ of the $L^i_\alpha$ matrix.
    \item The determinant $\det D$ of the $3 \times 3$ matrix $D = \td^i_a$.
    \item The determinant $\det Q$ of the $3 \times 3$ matrix $Q = Q_{i,a,1}$.
    \item The nine entries $DQ_{i,j}$ of the product $DQ$.
    \item The three elements $tL_i := tL^i_1 - L^i_2$.
\end{itemize}
There are a total of twenty elements in the list above; these polynomials give a map $g$ from the hypersurface $U_2 \subset \CC^{28}$ to $\CC^{20}$.
\item Since every GIO that is nonvanishing on $U_2$ factors into a product of elements in the list above, we obtain an induced map $h \colon \CC^{20} \to \CC^{69}$, such that the composition $hg$ is equal to $\phi f$.
Because of the factorization of the GIOs, the entries of $h$ are monomials.
The $69$ GIOs factor (up to a constant scaling factor) as products of the $20$ elements above as:
\begin{itemize}
    \item ($LLe$, $9$ of these): $\Delta_i \cdot \te^j$.
    \item ($QdL$, $27$ of these): $DQ_{i,j}\cdot  tL_i$.
    \item ($dddLL$, $3$ of these): $(\det D) \cdot \Delta_i $.
    \item ($(QQQ)_4LLLe$, $30$ of these): $(\det Q) \cdot \te^i \cdot m$, for the ten
    degree three monomials $m$ in the $tL_i$.
\end{itemize}
\item The image of $U_2$ under $g$
in $\CC^{20}$ is defined by the following three equations.
\[\begin{array}{rcl}
  \begin{pmatrix}tL_1 & tL_2 & tL_3\end{pmatrix}
  \begin{pmatrix}
  \te^1 & \Delta_1 \\
  \te^2 & \Delta_2 \\
  \te^3 & \Delta_3
  \end{pmatrix} = 0 \,, \\
  \det(DQ) - \det D \det Q = 0 \,.
\end{array}
\]
\end{itemize}
\begin{comp}\label{IdealofM2}
 The ideal of $M_2$ is defined by $6$ linear forms, $816$ quadratics, and $90$ quartics.
 While this gives an explicit list of generators for the ideal of $M_2$, the geometric structure of $M_2$ is more clearly understood via a birational projection 
\[
M_2 \stackrel{\pi}{\longrightarrow} \cV_2 \subseteq \CC^{39} \,.
\]
We tackle this in the next subsection.
\end{comp}
\subsection{A birational model of $M_2$ connecting to classical geometry}
As we have seen, on $X_2$, the $LH$ GIOs can be written in terms of the $LLe$ and $QdL$ GIOs.
A similar result holds for the $(QQQ)_4LLLe$ GIOs on $M_2$, except they are rational functions, not polynomials.
For notational convenience, and consistency with the computations in \S 5, we denote the GIO types in use as 
\begin{itemize}
\item The $30$ $(QQQ)_4LLLe$s as $q_1, \ldots, q_{30}$.
\item The $27$ $QdL$s as $a_1, \ldots, a_{27}$.
\item The $9$ $LLe$s as $b_1, \ldots, b_9$.
\item The $3$ $dddLL$s as $i_1, \ldots, i_3$.
\end{itemize}

\begin{comp}
On $X_2$, the $(QQQ)_4LLLe$ GIOs are rational functions of the $LLe$, $QdL$, and $dddLL$ functions.
We have relations for all $j = 1, \ldots, 30$ and all $\ell = 1,2,3$,
 \[q_j = \frac{b_k C(a_*)}{i_\ell}\,, \]
 where $C(a_*)$ is a determinantal cubic in the $a$ variables,
 and $k$ depends on $\ell$ and $j$.
\end{comp}

By the above computation, to understand the component $M_2$ up to birational equivalence, it suffices to consider the image of $X_2$ under the map to $\CC^{39}$ given by the GIOs of the three types $LLe$, $QdL$, and $dddLL$.
This corresponds to the projection map $\pi \colon M_2 \subset \CC^{69} \to \CC^{39}$.
Let $\cV_2 := \pi(M_2)$ be this projection.

The remainder of this section is devoted to a proof that the geometry of $\cV_2$ may be expressed as an incidence correspondence between Segre varieties, yielding a geometric picture of the structure of $M_2$.
The following matrices play a key role in our analysis. 
\begin{itemize}
\item $M_a := \left(\!\begin{array}{ccccccccc}
a_{1}&a_{4}&a_{7}&a_{10}&a_{13}&a_{16}&a_{19}&a_{22}&a_{25}\\
a_{2}&a_{5}&a_{8}&a_{11}&a_{14}&a_{17}&a_{20}&a_{23}&a_{26}\\
a_{3}&a_{6}&a_{9}&a_{12}&a_{15}&a_{18}&a_{21}&a_{24}&a_{27}
\end{array}\!\right) \,,$
\item $M_b := \left(\!\begin{array}{ccc}
b_{1}&b_{2}&b_{3}\\
b_{4}&b_{5}&b_{6}\\
b_{7}&b_{8}&b_{9}
\end{array}\!\right) \,,$
\item $M_z := \begin{pmatrix} i_1 & i_2 & i_3 \end{pmatrix}$\, \mbox{(we use $M_z$ rather than $M_i$ as the latter looks like an index)},
\item $J := \left(\!\begin{array}{ccc}
0&0&1\\
0&-1&0\\
1&0&0
\end{array}\!\right) \,.$
\end{itemize}
\begin{comp}\label{ProjM2}
    The ideal of the $\cV_2$ is minimally generated by 189 homogeneous quadratic equations, which may be partitioned into the five sets below:
\begin{enumerate}[leftmargin=.5in]
 \item The $108$ $2 \times 2$ minors of $M_a$.
 \item The $18$ $2 \times 2$ minors of $M_b^T | M_z^T$.
 \item The $9$ entries of $M_z \cdot J \cdot M_a$.
 \item The $27$ entries of $M_b^T \cdot M_a$.
 \item The $27$ entries of $M_b \cdot J \cdot M_a$.
\end{enumerate}
  \end{comp}

\begin{lem}~\label{lemma-segre}
    The first ideal above defines the Segre variety $\Sigma_{2,8}$ of $\PP^2 \times \PP^8 \subseteq \PP^{26}$, and the $2 \times 2$ minors of $M_b^T$ define the Segre variety $\Sigma_{2,2} \subseteq \PP^8$.
\end{lem}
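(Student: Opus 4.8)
The plan is to reduce both assertions to the classical description of a Segre variety as the locus of rank-one matrices, observing first that the matrices in question are \emph{generic}. Indeed, the entries $a_1,\ldots,a_{27}$ of $M_a$ are $27$ distinct coordinate functions arranged as a $3\times 9$ array, so $M_a$ is a generic $3\times 9$ matrix; likewise the nine distinct coordinates $b_1,\ldots,b_9$ make $M_b$, and hence its transpose $M_b^T$, a generic $3\times 3$ matrix. Since the ideal of $2\times 2$ minors of a matrix coincides with that of its transpose, the appearance of $M_b^T$ rather than $M_b$ is immaterial, and in both cases it remains only to identify the determinantal locus of a generic matrix.

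First I would recall the Segre embedding $\sigma\colon \PP^{p-1}\times\PP^{q-1}\to\PP^{pq-1}$, $([x_i],[y_j])\mapsto[z_{ij}]$ with $z_{ij}=x_iy_j$. Every $2\times 2$ minor satisfies $z_{ij}z_{k\ell}-z_{i\ell}z_{kj}=x_iy_jx_ky_\ell-x_iy_\ell x_ky_j=0$ identically, so the image lies in $V\big(I_{2\times 2}(Z)\big)$, where $Z=(z_{ij})$ is the generic coordinate matrix. Conversely a nonzero matrix all of whose $2\times 2$ minors vanish has rank exactly $1$ and thus factors as an outer product $x_iy_j$, recovering a well-defined point of $\PP^{p-1}\times\PP^{q-1}$; hence $\sigma$ is a set-theoretic bijection onto $V\big(I_{2\times 2}(Z)\big)$, which is irreducible as the image of the irreducible variety $\PP^{p-1}\times\PP^{q-1}$.

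To promote this to the ideal-theoretic statement I would invoke the classical theorem that the determinantal ideal generated by the $2\times 2$ minors of a generic matrix is prime and equals the full homogeneous ideal of the corresponding Segre variety. Applying this to $M_a$ with $(p,q)=(3,9)$ identifies $V\big(I_{2\times 2}(M_a)\big)$ with $\Sigma_{2,8}=\PP^2\times\PP^8\subseteq\PP^{26}$, and applying it to $M_b^T$ with $(p,q)=(3,3)$ identifies the corresponding locus with $\Sigma_{2,2}=\PP^2\times\PP^2\subseteq\PP^8$. I expect no serious obstacle here, since everything rests on a standard determinantal result; the only point meriting care is confirming that after the reductions of \S\ref{sec:four} no relations are imposed among the $a$'s or among the $b$'s, so that genericity truly holds and the theorem applies verbatim. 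As a consistency check I would verify the dimensions: the rank-one locus of $3\times 9$ matrices has affine dimension $3+9-1=11$, giving $\dim\Sigma_{2,8}=10=2+8$, and that of $3\times 3$ matrices has affine dimension $5$, giving $\dim\Sigma_{2,2}=4=2+2$, matching the expected dimensions of the two products of projective spaces.
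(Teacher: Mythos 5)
Your proposal is correct and follows essentially the same route as the paper's own proof: both identify the Segre variety as the rank-one locus under the Segre embedding and then invoke the classical determinantal theorem (as in Harris or Eisenbud) that the $2\times 2$ minors of a generic matrix of coordinates generate the full prime ideal of that variety. The extra details you supply --- the set-theoretic bijection, the observation that transposing $M_b$ changes nothing, and the dimension check --- are sound refinements of the same argument rather than a different method.
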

\begin{proof}
  Let ${\bf a} = (a_0, \ldots, a_n)$ be coordinates on $\PP^n$ and similarly ${\bf b} = (b_0, \ldots, b_m)$ be coordinates on $\PP^m$.
  We can represent elements of $\PP^{(n+1)(m+1)-1}$ by $(n+1) \times (m+1)$ matrices.
  The Segre variety $\Sigma_{n,m}$ is defined by the image of the map
  \[ ({\bf a}, {\bf b}) \mapsto \mbox{ the rank one matrix } {\bf a}^T {\bf b} \in \PP^{(n+1)(m+1)-1}. \]
The image consists of rank one matrices, and its ideal is generated by the $2 \times 2$ minors of a $(n+1) \times (m+1)$ matrix of coordinate variables on the image space.
So, in the case at hand, $\Sigma_{2,8}$ is defined by the ideal of $2 \times 2$ minors of $M_a$, and $\Sigma_{2,2}$ is defined by the ideal of $2 \times 2$ minors of $M_b$.
For more details, see Harris~\cite{Harris:1995} or Eisenbud~\cite{Eisenbud:1995}.
\end{proof}
We close this section with the promised geometric interpretation of the ideal appearing in Computation~\ref{ProjM2}:

\begin{thm}
    The affine variety $\cV_2$ is the cone over the incidence correspondence
    \[Y \subset \Sigma_{2,2} \times \Sigma_{2,8} \times \PP^2 \,, \]
    defined by the remaining equations in the above computation that are not the equations of the Segre varieties.
    It consists of all points (thought of as row vectors)
    \[ ((\theta_1,\theta_2),(\theta_3,\theta_4),\theta_5) \in (\PP^2 \times \PP^2) \times (\PP^2 \times \PP^8) \times \PP^2 \]
    such that $\theta_2 = \theta_5$ and $\theta_1 \perp \theta_3$ and $\theta_2 J \perp \theta_3$.
    These conditions define an ideal of five homogeneous quadratic polynomials of codimension four.
    Therefore, $Y$ is of dimension $12$ and is rational and smooth, and the cone $\cV_2$ is dimension $15$, rational and smooth away from the two-dimensional vertex.

\end{thm}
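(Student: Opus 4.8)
The plan is to substitute the Segre parametrizations from Lemma~\ref{lemma-segre} into the five families of quadrics in Computation~\ref{ProjM2} and read off the incidence conditions by inspection. On the common zero locus of family~(1) and of the $M_b$-minors inside family~(2) -- which by Lemma~\ref{lemma-segre} are exactly the equations cutting out the two Segre varieties -- the matrices $M_a$ and $M_b$ have rank one, so I may write $M_a=\theta_3^T\theta_4$ and $M_b=\theta_1^T\theta_2$ with $(\theta_1,\theta_2)\in\PP^2\times\PP^2$ and $(\theta_3,\theta_4)\in\PP^2\times\PP^8$, and put $\theta_5=M_z\in\PP^2$. Feeding these into the remaining quadrics, each family collapses to a single scalar statement: the mixed minors of $M_b^T\mid M_z^T$ force $\theta_5=\theta_2$; family~(4) becomes $M_b^TM_a=(\theta_1\cdot\theta_3)\,\theta_2^T\theta_4$, hence $\theta_1\perp\theta_3$; family~(5) becomes $M_bJM_a=(\theta_2J\theta_3^T)\,\theta_1^T\theta_4$, hence $\theta_2J\perp\theta_3$; and family~(3) becomes $M_zJM_a=(\theta_5J\theta_3^T)\,\theta_4$, which is implied once $\theta_5=\theta_2$. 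This identifies $Y$ with the claimed incidence locus.

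The counting is then transparent. The diagonal condition $\theta_2=\theta_5$ is the vanishing of the three $2\times2$ minors of the $2\times3$ matrix with rows $\theta_2$ and $\theta_5$ (codimension two), and each of $\theta_1\perp\theta_3$ and $\theta_2J\perp\theta_3$ is a single bilinear quadric (codimension one); this gives the advertised five quadrics, and I must confirm the total codimension is exactly four inside the $16$-dimensional ambient $\Sigma_{2,2}\times\Sigma_{2,8}\times\PP^2$. The cleanest route to dimension, smoothness, and rationality at once is to project $Y\to\PP^2$ onto the $\theta_3$ factor. For every $\theta_3\ne0$ the fiber is $\{\theta_1\in\theta_3^\perp\}\times\{\theta_2\in(\theta_3J)^\perp\}\times\{\theta_4\in\PP^8\}$ with $\theta_5=\theta_2$ determined; since $J$ is invertible both orthogonal complements are genuine $2$-planes that vary as rank-two bundles over $\PP^2$, so the fiber is constantly $\PP^1\times\PP^1\times\PP^8$. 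Hence $Y$ is the total space of a fiber product of projectivized vector bundles over $\PP^2$, which is smooth and rational of dimension $2+1+1+8=12$, confirming codimension four. I deliberately project to $\theta_3$ rather than to $(\theta_1,\theta_2)$: the latter would have fibers jumping from a point to a $\PP^1$ along $\{\theta_1\parallel\theta_2J\}$, wrecking the bundle structure.

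Finally I pass to the affine multi-cone $\cV_2\subset\CC^{39}$. All five quadrics are multihomogeneous in the three coordinate blocks $(M_b,M_a,M_z)$, so $\cV_2$ is stable under the $(\CC^*)^3$ rescaling and is precisely the multi-cone over $Y\subset\PP^8\times\PP^{26}\times\PP^2$; each projective factor contributes one cone direction, giving $\dim\cV_2=12+3=15$, and rationality descends from that of $Y$. Where all three blocks are nonzero, $\cV_2$ is locally a $(\CC^*)^3$-bundle over the smooth variety $Y$, so it is smooth there. The genuine obstacle, which I expect to be the bulk of the work, is the boundary analysis along the strata where one or two blocks vanish: I would compute the Zariski tangent space of the five quadrics directly at such points and show it has the expected dimension $15$ except on $\{M_a=M_b=0\}$. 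A differential computation at a point with $M_a=0$ but $M_b\ne0$ shows the tangent constraints pin each column of $dM_a$ to the line $\theta_1^\perp\cap(\theta_2J)^\perp$ and cut $(dM_b,dM_z)$ down to the $6$-dimensional tangent space of a rank-one $3\times4$ matrix, giving tangent dimension $9+6=15$, a smooth point; the stratum $\{M_b=0,\,M_a\ne0\}$ behaves symmetrically. Only on $\{M_a=M_b=0\}$, where $M_z=\theta_5$ ranges over a free $\PP^2$, does the tangent space jump, so this $\PP^2$ is the two-dimensional vertex off of which $\cV_2$ is smooth.
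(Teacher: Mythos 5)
Your identification of the incidence conditions is exactly the paper's argument: substitute the rank-one parametrizations $M_a=\theta_3^T\theta_4$, $M_b=\theta_1^T\theta_2$ into the non-Segre equations and read off $\theta_5=\theta_2$, $\theta_1\perp\theta_3$, $\theta_2 J\perp\theta_3$, with families (3) and (5) coinciding once $\theta_5=\theta_2$. Where you genuinely improve on the paper is the codimension count: the paper disposes of it with ``a computation confirms that the ideal generated by these five quadrics has codimension four,'' whereas your fibration of $Y$ over the $\theta_3$-factor $\PP^2$, with constant fiber $\PP^1\times\PP^1\times\PP^8$ (using that $J$ is invertible), proves codimension four, smoothness, and rationality of $Y$ in one stroke; your remark on why one must project to $\theta_3$ rather than to $(\theta_1,\theta_2)$ is also apt.

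The final paragraph, however, contains a genuine gap: both of your boundary tangent-space claims are incorrect, so the conclusion that $\cV_2$ is singular exactly along $\{M_a=M_b=0\}$ does not follow. (i) On the stratum $M_a=0$, $M_b\neq 0$, your count presupposes that $\theta_1^\perp\cap(\theta_2 J)^\perp$ is a line. At points with $\theta_1\parallel\theta_2 J$ (for instance $\theta_2=(1,0,0)$, $\theta_1=(0,0,1)$, $Z=\lambda\theta_2$; all $189$ quadrics vanish at such a point, so it lies on $\cV_2$), families (3), (4), (5) impose the \emph{same} nine conditions on $dM_a$, giving tangent dimension $18+6=24>15$. (ii) The stratum $M_b=0$ does \emph{not} behave symmetrically, because the defining equations are not symmetric under $M_a\leftrightarrow M_b$: at a point $(A,0,0)$ with $A$ rank one and nonzero, the differentials of family (2) vanish identically (those quadrics involve only the $b,z$ variables, which are all zero there), so $(dM_b,dM_z)$ is cut only by families (4), (5) (five independent conditions on $dM_b$) and (3) (one condition on $dM_z$), while family (1) leaves the $11$-dimensional tangent space of the rank-one locus for $dM_a$; the total is $11+4+2=17>15$. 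Hence the entire $11$-dimensional cone $\{(A,0,0):\operatorname{rank}A\le 1\}$ consists of singular points. So the singular locus is genuinely larger than your $\PP^2$, and this part of the statement cannot be recovered by a more careful stratification: either ``vertex'' must be read far more broadly than $\{M_a=M_b=0\}$, or the smoothness assertion needs qualification. Note that the paper's own proof is silent here --- it ends after identifying $Y$ and citing the codimension computation --- so this is the one piece of the theorem for which neither your argument nor the paper's supplies a proof, and yours, as written, asserts something false about the quadrics' Jacobian at the boundary strata.
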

\begin{proof}
    By Lemma~\ref{lemma-segre}, $Y$ is a subset of the first displayed equation in the statement of the theorem.
    The $2 \times 2$ minors of $M_b^T | M_z^T$ involving the last column (which is $\theta_5^T$) force $\theta_5$ to be proportional to any row of $M_b$ ($M_b$ has rank one), accounting for the $\theta_2 = \theta_5$ equation.
    We have now accounted for all equations of types (1) and (2) above.

    For the equations of type (3), the matrix $M_a = \theta_3^T \cdot \theta_4$, 
    therefore $M_z \cdot J \cdot M_a$ yields the single equation 
     \[ {\bf \theta}_5 \cdot J \cdot {\bf \theta}_3^T = 0 \,, \]
    because $\theta_4 \ne 0$ and $ {\bf \theta}_5 \cdot J \cdot {\bf \theta}_3^T \cdot \theta_4 = 0.$
    
    For equations of type (4), we have
    \[ M_b^T \cdot M_a = \theta_2^T \cdot \theta_1 \cdot \theta_3^T \cdot \theta_4 = 0 \,, \]
    and therefore, reasoning as above,
    \[ \theta_1 \theta_3^T = 0 \,. \]

   For equations of type (5), we have 
       \[ M_b \cdot J \cdot M_a = \theta_1^T \cdot \theta_2 \cdot J \cdot \theta_3^T \cdot \theta_4 = 0 \,, \]
    and therefore, reasoning as above,
    \[ \theta_2 \cdot J \cdot \theta_3^T = 0 \,.\]
    But $\theta_2 = \theta_5$, and this duplicates the equation of type (3) above.

    $\theta_2 = \theta_5$ gives three homogeneous quadrics, and the two equations of type (3) and (4) each yield a homogeneous quadric, resulting in five quadrics.
    A computation confirms that the ideal generated by these five quadrics has codimension four.
\end{proof}

 \begin{comp}
 Restricting $M_2$ by setting $Q=u=d=0$ yields an irreducible complex affine variety of dimension five: a cone over the Segre variety $\PP^2 \times \PP^2$.
 This is precisely the variety which \S 3.3 of~\cite{He:2014oha} shows defines the vacuum moduli space restricted to the electroweak sector.  
 \end{comp}
 
\section{Analysis of the $M_3$ component of the vacuum moduli space}\label{sec:five}

Our goal is to understand the geometry of the image $M_3$ of component $X_3$ under the GIO map $\phi$.
In contrast to the analysis of $M_1$ and $M_2$, 
there are $783$ GIOs which do not vanish on $X_3$, in the following $13$ types 
\[
\begin{array}{l}
  QdL \,,~ udd \,,~ uude \,,~ uuuee \,,~ dddLL \,,~ LLe \,,~ QQQL \,,~ QuQd \,,\\
 \vspace{.2in} QuLe \,,~ QQQQu \,,~ QuQue \,,~ QQQ_4LLLe \,,~ uudQdQd \,.
  \end{array}
\]
Such a large number of GIOs makes the task of finding the image seem problematic, in particular,
the ideal defining this image in $\CC^{783}$ 
has so many generators that an explicit description
is unenlightening.

\subsection{Overview of the strategy}
Instead of finding the defining equations for the image in $\CC^{783}$, our plan is to find a projection of $M_3$
to a smaller dimensional base space $Y$, which preserves certain essential features of $M_3$, such as dimension and such that the resulting projection map from $M_3$ to $Y$ is generically (\textit{i.e.}, on an open dense subset of $M_3$) one-to-one.

We use a fast probabilistic algorithm to determine the dimension of $M_3$, which a computation shows to be $29$, as expected for the quotient by the group action.

To identify the most economical (\textit{i.e.}, in a target space
where we can understand the image) projection, we systematically search through all small subsets (\textit{e.g.}, two or three or four sets of the $13$ GIO types that are non-vanishing on $X_3$) to pinpoint those
having the image of $X_3$ of the expected dimension $29$.

\comment{
by choosing a small set of the above $13$ GIO types.
Then, we check the dimension of the image of $X_3$ under this set of GIOs, using the same fast probabilistic
algorithm.  Our goal is to find a small set of GIOs
having image of expected dimension $29$.
}

It turns out to be the case that there no subsets of pairs of GIO types having image dimension $29$, but there are $81$ subsets of three GIO types having image dimension $29$.

Our next aim is to identify one of these sets where we can prove that the projection map from $M_3$ to $Y$ is generically one-to-one. 
\begin{comp}
    There are no sets involving only one or two GIO types such that the image of $X_3$ under the maps determined by these types has image of dimension $29$. There are $81$ triples of GIO types with $29$ dimensional image, and we list several of the smallest sets in Table~\ref{tbl:two}:
   \begin{table}[h!!!]
\begin{center}{\small
\begin{tabular}{|c|c|}\hline
Types & Total number of GIOs \\
\hline \hline
$\{QdL, udd, uuuee\}$ & $42$ \\ \hline
$\{QdL, uude, dddLL\}$ & $57$ \\ \hline
$\{QdL, uude, uuuee\}$ & $60$\\ \hline
$\{QdL, udd, uude\}$ & $63$ \\ \hline
\vdots & \vdots \\
\hline
\end{tabular}}
\end{center}{\caption{\label{tbl:two}
{\textsf{Types of GIO triples with $29$ dimensional image on component $X_3$.}}
}}
\vspace{-.1in}
\end{table}
\end{comp}
    In terms of the number of polynomials it contains, the smallest GIO set above is $\{QdL, udd, uuuee\}$.
    However, a computation shows that while it has a $29$-dimensional image, the map from $M_3$ to it is generically two-to-one, so this set is not suited to our purposes.
    It turns out that the set of GIO types $\{QdL, udd, uude\}$ yields a manageable set of equations, and we can prove that the projection is generically one-to-one.
    From a physics point of view, this set is preferable because the $udd$ and $uude$ types correspond to the (SUSY) neutron and hydrogen atom, and so have natural physical interpretations.

To show that the resulting map $\pi \colon M_3 \to Y$ is birational our method is to divide and conquer:  starting with our base set of three GIO types, with image $Y = Y_3$, we add a single GIO type to our base set and consider the projection map from the image 
$Y_4$ to $Y_3 = Y$, and show that this is birational.
Then we iterate the process, adding one GIO type at a time as in Figure~\ref{ProjectionTower}, showing each of these
projections is birational, and consequently so is their composition.
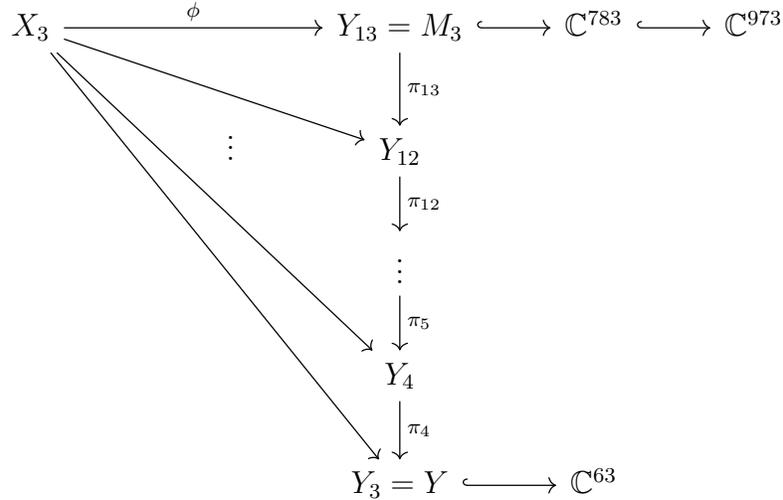
\begin{figure}[ht]
\begin{center}
  \begin{tikzcd}
    X_3 \arrow[rrr, "\phi"] \arrow[drrr]
    \arrow[dddrrr]
    \arrow[ddddrrr]
    & & & Y_{13} = M_3 \arrow[d, "\pi_{13}"] \arrow[r, hook] & \CC^{783} \arrow[r, hook] & \CC^{973}\\
      & & \vdots & Y_{12} \arrow[d, "\pi_{12}"] \\
      & & & \vdots \arrow[d, "\pi_5"] \\
      & & & Y_{4} \arrow[d, "\pi_4"] \\
      & & & Y_{3} = Y \arrow[r, hook] & \CC^{63}
    \end{tikzcd}
\end{center}
\caption{The tower of birational projections from $M_3$ to the base $Y$.}
\label{ProjectionTower}
\end{figure}

Choosing the order of this sequence of projections is a delicate process, because a poor choice at one step makes the determination of birationality of that step computationally exorbitant.
To overcome this, we will use a multigrading on the GIO types to identify the order in which to add the GIO types.

There is a natural $\ZZ^7$ grading on the polynomial ring $R$ with $\{u,d,L,Q,e,H,\barH\}$
corresponding to $\{{\bf e_1}, \ldots, {\bf e_7}\}$.
Each of the $28$ GIO types is homogeneous with respect to this grading. 
However, on $X_3$, the $H_\alpha$ and $\barH_\alpha$ vanish, so we have a $\ZZ^5$ (rather than $\ZZ^7$) grading.
An additional constraint is that the equations defining $X_3$ are not graded, as two of the equations are of the form $\sum dQ + \sum Le$. 
So we further restrict from the $\ZZ^5$-grading to a $\ZZ^4$-grading as in Table~\ref{tbl:three} below.

\begin{table}[h!!!]
\begin{center}{\small
\vspace{-.1in}
\begin{tabular}{|c|c|c|c|c|}\hline
$u$ & $d$ & $L$ & $Q$ & $e$ \\
\hline \hline
1 & 0 & 0 & 0 & 0 \\
0 & 1 & 0 & 0 & 1 \\
0 & 0 & 1 & 0 & -1 \\
0 & 0 & 0 & 1 & 1 \\
\hline
\end{tabular}}
\end{center}\vspace{-.1in}{\caption{\label{tbl:three}
{\textsf{The $\ZZ^4$-grading on component $X_3$.}}
}}
\vspace{-.1in}
\end{table}

By construction, the image $M_3$ of $X_3$ is also $\ZZ^4$-graded.
In fact $I_{X_3}$ and the ideal of every intermediate $Y_i$ in the tower of projections in Figure~\ref{ProjectionTower} are all $\ZZ^4$-graded.

\subsection{Equations for the base space $Y$ using the GIO types $QdL, uude, udd$.}
Recall that $Y = Y_3$ is the image of $X_3$ under the map defined by the set of base GIO types $QdL, udd, uude$ and serves as the foundation for the tower of the projections in Figure~\ref{ProjectionTower}.
We now describe the ideal of $Y = Y_3$, \textit{i.e.}, the polynomial relations on these three types of GIOs, modulo the equations defining $X_3$.

For ease of notation in this section, we will rename the $13$ types of GIOs which are nonvanishing on $X_3$ with variables in Table~\ref{tab:newnames}.
Notice that we avoid certain letters that are used elsewhere in the paper (\textit{e.g.}, $d$, $e$).
Explicitly, we rename the $27$ $QdL$ variables as $\{a_0, \ldots, a_{26}\}$, the nine $udd$ variables as $\{c_0, \ldots, c_{8}\}$, and the $27$ $uude$ variables as $\{h_0, \ldots, h_{26}\}$.
Notice that these $63$ variables are the coordinates on $Y_3 \subset \CC^{63}$.
\vskip -.05in
\begin{table}[h!!!]
\begin{center}{\small
\begin{tabular}{|c||c|c|}\hline
\mbox{Type} & \mbox{Variable} & \mbox{Number} \\
\hline \hline
$QdL$ & $a$  & 27 \\ \hline

$LLe$ & $b$  & 9  \\ \hline
$udd$ & $c$ & 9  \\ \hline
$uuuee$ & $g$ & 6 \\ \hline
$uude$ & $h$  & 27 \\ \hline
$dddLL$ & $i$  & 3 \\ \hline
$QQQL$ & $j$  & 24 \\ \hline
$QuQue$ & $k$  & 108  \\ \hline
$QuLe$ & $m$  & 81 \\ \hline
$QuQd$ & $n$  & 81 \\ \hline

$QQQQu$ & $p$  & 54  \\ \hline

$(QQQ)_4LLLe$ & $q$  & 30 \\ \hline 
$uudQdQd$ & $r$  & 324 \\ \hline %
\end{tabular}}
\end{center}{\caption{\label{tab:newnames}
{\textsf{Renaming the $13$ GIO types for component $X_3$; there are a total of $783$ polynomials in these $13$ types. A table describing all $28$ GIO types is given as an appendix.}}}}
\end{table}
\noindent Now define matrices $M_{a1}, M_{a3}, M_{h1}, M_{h2}, M_{h3}, ,M_C, v_a, v_c, J$ as below:
\[
\begin{array}{{ccl}}
    \vspace{.15in} M_{a1} & := & 
    \left(\!\begin{array}{ccccccccc}
a_{0}&a_{3}&a_{6}&a_{9}&a_{12}&a_{15}&a_{18}&a_{21}&a_{24}\\
a_{1}&a_{4}&a_{7}&a_{10}&a_{13}&a_{16}&a_{19}&a_{22}&a_{25}\\
a_{2}&a_{5}&a_{8}&a_{11}&a_{14}&a_{17}&a_{20}&a_{23}&a_{26}
\end{array}\!\right) \,, \\

 \vspace{.15in} M_{a3} & := & 
    \left(\!\begin{array}{ccccccccc}
a_{0}&a_{1}&a_{2}&a_{9}&a_{10}&a_{11}&a_{18}&a_{19}&a_{20}\\
a_{3}&a_{4}&a_{5}&a_{12}&a_{13}&a_{14}&a_{21}&a_{22}&a_{23}\\
a_{6}&a_{7}&a_{8}&a_{15}&a_{16}&a_{17}&a_{24}&a_{25}&a_{26}
\end{array}\!\right) \,, \\

    \vspace{.15in} M_{h1} & := & 
    \left(\!\begin{array}{ccccccccc}
h_{0}&h_{3}&h_{6}&h_{9}&h_{12}&h_{15}&h_{18}&h_{21}&h_{24}\\
h_{1}&h_{4}&h_{7}&h_{10}&h_{13}&h_{16}&h_{19}&h_{22}&h_{25}\\
h_{2}&h_{5}&h_{8}&h_{11}&h_{14}&h_{17}&h_{20}&h_{23}&h_{26}
\end{array}\!\right) \,, \\

 \vspace{.15in} M_{h2} & := & 
    \left(\!\begin{array}{ccccccccc}
h_{0}&h_{1}&h_{2}&h_{3}&h_{4}&h_{5}&h_{6}&h_{7}&h_{8}\\
h_{9}&h_{10}&h_{11}&h_{12}&h_{13}&h_{14}&h_{15}&h_{16}&h_{17}\\
h_{18}&h_{19}&h_{20}&h_{21}&h_{22}&h_{23}&h_{24}&h_{25}&h_{26}
\end{array}\!\right) \,, \\

 \vspace{.15in} M_{h3} & := & 
    \left(\!\begin{array}{ccccccccc}
h_{0}&h_{1}&h_{2}&h_{9}&h_{10}&h_{11}&h_{18}&h_{19}&h_{20}\\
h_{3}&h_{4}&h_{5}&h_{12}&h_{13}&h_{14}&h_{21}&h_{22}&h_{23}\\
h_{6}&h_{7}&h_{8}&h_{15}&h_{16}&h_{17}&h_{24}&h_{25}&h_{26}
\end{array}\!\right) \,, \\

      \vspace{.15in} M_C & := & 
      \left(\!\begin{array}{cccccccccccccccc}
c_5&0&c_2&0&0&0&c_6&0&c_7&0&0&0&-c_8&0&c_8&0\\
-c_4&0&-c_1&0&0&0&0&c_6&0&0&c_8&0&c_7&c_7&0&0\\
c_3&0&c_0&0&0&0&0&0&0&c_7&0&c_8&-c_6&-c_6&c_6&c_6\\
0&0&0&c_2&c_8&0&-c_3&0&-c_4&0&0&0&-c_5&-c_5&0&c_5\\
0&0&0&-c_1&-c_7&0&0&-c_3&0&0&-c_5&0&c_4&0&-c_4&-c_4\\
0&0&0&c_0&c_6&0&0&0&0&-c_4&0&-c_5&-c_3&0&0&0\\
0&c_5&0&0&0&c_8&c_0&0&c_1&0&0&0&0&c_2&0&0\\
0&-c_4&0&0&0&-c_7&0&c_0&0&0&c_2&0&0&0&c_1&0\\
0&c_3&0&0&0&c_6&0&0&0&c_1&0&c_2&0&0&0&c_0
\end{array}\!\right) \,, \\

\vspace{.15in} v_a & := & \left(\!\begin{array}{ccc}
a_{0}+a_{12}+a_{24}&a_{1}+a_{13}+a_{25}&a_{2}+a_{14}+a_{26}
\end{array}\!\right) \,, \\

\vspace{.15in}v_c & := & \left(\!\begin{array}{ccccccccc}
c_{2}&-c_{1}&c_{0}&c_{5}&-c_{4}&c_{3}&c_{8}&-c_{7}&c_{6}
\end{array}\!\right) \,, \\

J  &:=&
    \left(\!\begin{array}{ccc}
0 & 0&1\\
0&-1&0\\
1&0&0
\end{array}\!\right)
\end{array}
\]

\begin{comp}\label{M3baseIdealgens}
The polynomials in the list below minimally generate the ideal $I_Y$. 
\begin{enumerate}
\item[(1)] The $84$ elements of type $aaa$ are the $3 \times 3$ minors of $M_{a1}$.
\item[(2)] The $108$ elements of type $hh$ are the $2 \times 2$ minors of $M_{h1}$.
\item[(3)] The $9$ elements of type $ah$ are the entries of the product $v_a M_{h1}$.
\item[(4)] The $3$ elements of type $ac$ are the entries of the product $M_{a1} v_c^T$.
\item[(5)] The $48$ elements of type $hc$ are the entries of the product $M_{h1} M_C$.
\item[(6)] The $18$ elements of type $ah^2$: These are quite complex, and we describe them in some detail below:
\end{enumerate}
\begin{itemize}
\item Nine of the $18$ equations of type $(6)$ come from products of the form
 \[
 M_{a1} \cdot \Lambda^2(M_{h2_{ijk}}\cdot J), \mbox{ for }(ijk) \in \{(036),(147), (258)\} \,,
 \]
and we flatten the $3 \times 3$ matrix $\Lambda^2(M_{h2_{ijk}}\cdot J)$ into a $9 \times 1$ column vector.

\item The remaining $9$ equations of type $(6)$ come from products of the form
 \[
 M_{a1} \cdot [\Lambda^2(J \cdot M_{h3_{i,j}}),\Lambda^2(J \cdot M_{h3_{i,k}}),\Lambda^2(J \cdot M_{h3_{k-j+i,k}})], \mbox{ for }(ijk) \in \{(047),(058), (158)\},
 \]
and the matrix of the three concatenated $\Lambda^2$ terms is written as a $9 \times 1$ column vector. 
\end{itemize}
The indexing of the minors of the $\Lambda^2$ matrices is chosen to be consistent with that used by the {\tt MSSM.m2} package.

 \end{comp}

All of these have geometric interpretations.
For example, the $3 \times 3$ entries of $M_a$ define the secant variety to the Segre variety $\Sigma_{2,8}$, while the $48$ entries of type $hc$ can be interpreted as taking Pl\"ucker coordinates on the $18$ maximal minors of the $6 \times 3$ matrix representing $uuuddd$, but excluding the two pure minors $uuu$ and $ddd$. 

\subsection{Birationality of the maps in the tower of GIOs}
We now describe the projections $\pi_i$ and the relations which define the maps.
It is important to note that we are working on $X_3$, and hence can reduce relations modulo the ideal of $X_3$. We begin with a ``proof of concept'' using the most simple relation:
\begin{comp}\label{uuueeinTermsofbase}
Each GIO in the $uuuee$ type may be expressed as a rational function in the $udd$ and $uude$ GIOs.
Recall that we have renamed the six $uuuee$ GIOs as $\{g_0, \ldots, g_5\}$, the $9$ $udd$ GIOs as $\{c_0,\ldots,c_8\}$, and the $27$ $uude$ GIOs as $\{ h_0,\ldots, h_{26}\}$.
In this notation, we obtain that on $X_3$ we have 
\begin{large}
    \[
\begin{array}{ccc}
g_0 & = & \frac{6\,h_{15}h_{21}-6\,h_{12}h_{24}}{c_{8}} \,, \\
g_1 & = & \frac{6\,h_{15}h_{22}-6\,h_{12}h_{25}}{c_{8}} \,, \\
g_2 & = &\frac{6\,h_{16}h_{22}-6\,h_{13}h_{25}}{c_{8}} \,, \\
g_3 & = &\frac{6\,h_{15}h_{23}-6\,h_{12}h_{26}}{c_{8}} \,, \\
g_4 & = &\frac{6\,h_{16}h_{23}-6\,h_{13}h_{26}}{c_{8}} \,, \\
g_5 & = &\frac{6\,h_{17}h_{23}-6\,h_{14}h_{26}}{c_{8}} \,. \\
\end{array}
\]
\end{large}
\end{comp}
The key to finding these relations is to observe that in multidegree $(4,4,-2,4)$ there are monomials of type $g \cdot c$ corresponding to $(uuuee)(udd)$, and of type $h^2$ corresponding to $(udde)(uude)$. 
By exploiting this multigrading, we are able to determine an order for the projections in the tower of Figure~\ref{ProjectionTower} that allows us to compute the relations. The equations displayed above are the simplest example; the expressions as rational functions become increasingly complex for GIO types with more members. 

\begin{comp}\label{GIOtowerOrder}
The main result of this section is the description of the maps appearing in Figure~\ref{ProjectionTower}, which are displayed in Table~\ref{tbl:giofull3}.
The bottommost (hence final) projection in the tower is 
\[
Y_4 \stackrel{\pi_4}{\longrightarrow} Y_3 = Y \,,
\]
and is exactly the map described in Computation~\ref{uuueeinTermsofbase}.
\begin{table}[h!!!]
\begin{center}{\small
\begin{tabular}{|c||c|c|}\hline
\mbox{Map} & \mbox{GIO eliminated} & \mbox{Relations in degrees} \\
\hline \hline
$\pi_{13}$ & $QuQue = k$  & $k=k$, $kj = pm$ \\ \hline
$\pi_{12}$ & $(QQQ)_4LLLe=q$  & $q=ja+jb, qa=jab, qab=jaab$  \\ \hline
$\pi_{11}$ & $QQQQu=p$ &$p=p$, $pi=jac$ \\ \hline
$\pi_{10}$ & $QuLe = m$  & $m = m$, $mcc=cah$ \\ \hline
$\pi_{9}$ & $uudQdQd=r$  & $r=ah$, $rb=aah$ \\ \hline
$\pi_{8}$ & $QuQd=n$  & $n=n$, $ni=aac$  \\ \hline
$\pi_{7}$ & $QQQL = j$  & $ji=aaa$ \\ \hline
$\pi_{6}$ & $dddLL = i$  & $ih=bcc$  \\ \hline
$\pi_{5}$ & $LLE= b$  & $b=a$, $bha=haa$ \\ \hline
$\pi_{4}$ & $uuuee =g$  & $gc=hh$ \\ \hline
\end{tabular}}
\end{center}{\caption{\label{tbl:giofull3}
{\textsf{\!\!\!\!Maps in the tower of Figure~\ref{ProjectionTower}. \!Base GIO types are $QdL=a$, $udd = c$, $uude=h$.}}}}
\label{comp3projections}
\end{table}
\end{comp}

 There are a number of instances where a set of GIOs, when restricted to $X_3$, is not minimal. For example, the type $QuQd$ consists of $81$ GIOs. However, when restricting to $X_3$, nine of these GIOs become linearly dependent. These linear relations are reflected in the table by ``$n=n$'' for $QuQd$; this phenomenon occurs in several other types. Some GIO variables may be expressed linearly in terms of other GIO variables; the ``$b = a$'' for $LLe$ reflects that there are linear dependencies between the $a$ and $b$ variables. Finally, some GIO variables can be expressed as quadrics in other GIO variables, such as ``$q=ja+jb$'' for $(QQQ)_4LLLe$ and ``$r=ah$'' for $uudQdQd$. 
 
\section{Adding neutrinos to the superpotential}\label{sec:six}

Up to this point, we have considered the superpotential without the singlet right handed neutrinos, as in~\eref{eq:renorm}:
\[
\begin{array}{ccc}
W_{\rm minimal}& = & C^0 \sum_{\alpha, \beta} H_\alpha \barH_\beta \eps + \sum_{i,j} C^1_{ij} \sum_{\alpha, \beta, a} Q^i_{a,\alpha} u^j_a H_\beta \eps \nonumber \\
 & &+ \sum_{i,j} C^2_{ij} \sum_{\alpha, \beta, a} Q^i_{a,\alpha} d^j_a \barH_\beta \eps + \sum_{i,j} C^3_{ij} e^i \sum_{\alpha, \beta} L^j_{\alpha} \barH_\beta \eps ~.
 \end{array}
 \]
 We now add in neutrino terms, as in~\eref{eq:neut}
 \[
 \begin{array}{ccc}
 W_{\rm MSSM} & = & W_{\rm minimal} + W_{\rm neutrinos} ~,
 \end{array}
 \]
 where 
 \[
 \begin{array}{ccc}
 W_{\rm neutrinos} & = & \sum_{i,j}C^4_{ij} \nu^i\nu^j+\sum_{i,j} C^5_{ij} \nu^i \sum_{\alpha,\beta} L^j_{\alpha} H_\beta ~. 
\end{array}
\]

\subsection{Relating the superpotentials $W_{\rm minimal}$ and $W_{\rm MSSM}$}
For brevity, let $H\overline{H}, QuH, Qd\overline{H}, L\overline{H}e$ denote the summands in $W_{\rm minimal}$, and $\nu\nu, \nu LH$ the summands in $W_{\rm neutrinos}$. Then the master space equations for $W_{\rm MSSM}$ satisfy
\begin{large}
\begin{equation}\label{MSSMmasterEq}
\begin{array}{ccc}
\frac{\partial W_{\rm MSSM}}{\partial Q}&=&\frac{\partial W_{\rm minimal}}{\partial Q} \,, \\
\frac{\partial W_{\rm MSSM}}{\partial u}&=&\frac{\partial W_{\rm minimal}}{\partial u} \,, \\
\frac{\partial W_{\rm MSSM}}{\partial d}&=&\frac{\partial W_{\rm minimal}}{\partial d} \,, \\
\frac{\partial W_{\rm MSSM}}{\partial e}&=&\frac{\partial W_{\rm minimal}}{\partial e} \,, \\
\frac{\partial W_{\rm MSSM}}{\partial \overline{H}}&=&\frac{\partial W_{\rm minimal}}{\partial \overline{H}} \,, \\
\frac{\partial W_{\rm MSSM}}{\partial H}&=&\frac{\partial W_{\rm minimal}}{\partial H}+\frac{\partial \nu LH}{\partial H} \,, \\
\frac{\partial W_{\rm MSSM}}{\partial L}&=&\frac{\partial W_{\rm minimal}}{\partial L}+\frac{\partial \nu LH}{\partial L} \,, \\
\frac{\partial W_{\rm MSSM}}{\partial \nu}&=&\frac{\partial W_{\rm neutrinos}}{\partial \nu}  \,.
\end{array}
\end{equation}
\end{large}

\subsubsection{Comparison to results on the electroweak sector}
As a quick calibration, we restrict the master space of $W_{\rm MSSM}$ to the electroweak sector, which corresponds to setting (for all indices)
\[
Q = u = d = 0 \mbox{ in }W_{\rm MSSM} ~.
\]
Let $W_{\rm MSSM}^{\rm res}$ denote the specialization above, and compute the Jacobian ideal of $W_{\rm MSSM}^{\rm res}$. 
\begin{comp}\label{sixone}
The restricted master space $W_{\rm MSSM}^{\rm res}$ has two components, defined by the vanishing locus of ideals $P_1$ and $P_2$, where  $($again, all indices present$)$
\[
\begin{array}{cccc}
P_2 &=& \langle Q,u,d, \nu, \overline{H}, H_1+\sum\limits_{i} L^i_1 e^i, H_2+\sum\limits_{i} L^i_2 e^i, I_{2 \times 2}[H|L] \rangle  & = I_2+ \langle Q,u,d, \nu \rangle.\\
P_3 &=& \langle Q,u,d, \nu, \overline{H}, H, \sum\limits_{i} L^i_1 e^i, \sum\limits_{i} L^i_2 e^i \rangle & = I_3+ \langle Q,u,d, \nu \rangle. \\
\end{array}
\]
Here $I_{2 \times 2}[H|L]$ denotes the $2 \times 2$ minors of the $2 \times 4$ matrix whose first column is $[H_1,H_2]^T$ and remaining 3 columns are the $2 \times 3$ matrix of the $L$ variables.
\end{comp}
\noindent On the locus $V(P_2)$, all the GIOs vanish, while on $V(P_3)$, all the GIO types except $LLe$ vanish.
A computation confirms that the image of the restricted master space is the Veronese surface, as shown in \S5 of~\cite{He:2014oha}.

\subsection{The master space and vacuum moduli space for $W_{\rm MSSM}$}
Assuming that the coupling matrices are invertible, note that the last equality in~\eref{MSSMmasterEq},
\[
\frac{\partial W_{\rm MSSM}}{\partial \nu}=\frac{\partial W_{\rm neutrinos}}{\partial \nu} \,, 
\]
shows that each neutrino $\nu^i$ may be expressed as a bihomogeneous quadratic equation in the $L, H$ variables. 

In particular, the $\nu_i$ are expressed parametrically in terms of the $L, H$ variables. There is also a deformation resulting from 
\[
\begin{array}{ccc}
\frac{\partial W_{\rm MSSM}}{\partial H}&=&\frac{\partial W_{\rm minimal}}{\partial H}+\frac{\partial \nu LH}{\partial H} \,, \\
\frac{\partial W_{\rm MSSM}}{\partial L}&=&\frac{\partial W_{\rm minimal}}{\partial L}+\frac{\partial \nu LH}{\partial L} \,. \\
\end{array} 
\]
However, these deformations are fairly simple, because the two partials 
\[
\frac{\partial \nu LH}{\partial H} \mbox{ and } \frac{\partial \nu LH}{\partial L}
\]
yield bihomogeneous polynomials of type $\nu L$ and $\nu H$. 
\begin{comp}
The Jacobian ideal of $W_{\rm MSSM}$ defines the master space of the superpotential with neutrinos.
It has three irreducible components corresponding to prime ideals $J_i$ below; in the notation from \S3, the corresponding ideals are 
\[
\begin{array}{ccc}
I_1 &=& J_1 \,,\\
I_2 & \subseteq & J_2 \,,\\
I_3 & = & J_3 \,.
\end{array}
\]
The codimension of $V(J_2)$ in $V(I_2)$ is two, so $V(J_2)$ is of dimension $25$. Furthermore, the variables $\nu^i$ vanish on each of the components of the master space for $W_{\rm MSSM}$. Hence, the master spaces for $W_{\rm MSSM}$ and $W_{\rm minimal}$ may both be regarded as embedded in the same ambient space. The result above shows that for the corresponding vacuum moduli spaces, the images of two of the three components of the master spaces coincide. 

\noindent The ideal of $J_2$ is minimally generated by $14$ linear forms and $80$ quadrics; using the notation of \S3 we have
\[
J_2 = \langle \overline{H}, u, \nu, I_{2 \times 2}[(M_H)_1|M_Q|M_L], F_1, F_2\rangle \,.
\]
    Note that $J_2$ differs from $I_2$ only in the addition of $\mid M_L$ to (matrix defining) the ideal of $2 \times 2$ minors, and that setting $Q=u=d=0$ in $J_2$ yields the ideal $P_2$.
\end{comp}

\begin{comp}
The image 
\[
V(J_2) \longrightarrow \CC^{976}
\]
under the map determined by the full set of all GIOs is zero. 
\end{comp}
\begin{proof}
Compute the restriction of each GIO type to $V(J_2)$, note they all vanish.
\end{proof}
\noindent 
As $V(J_2) \subset V(I_2)$ we might have naively expected that the image of $V(J_2)$
would reflect the Veronese surface appearing after Computation~\ref{sixone}. But in fact the Veronese surface is the image of the restriction of $V(I_2) \cap V(J_3)$
to the electroweak sector with neutrinos added (\textit{i.e.}, $Q = u = d = \nu = 0$).

In \S\ref{sec:three} we showed that the primary decomposition of the Jacobian ideal of $W_{\rm minimal}$ is the intersection of three \textit{prime} ideals. It turns out that the primary decomposition of the Jacobian ideal of $W_{\rm MSSM}$ is more complicated, however, the radical of this ideal is still an intersection of three prime ideals. It would be interesting to understand the geometric and physical implications of this dichotomy.\vspace{-.1in}
\subsection{Software and future directions}\vspace{-.1in}
Computations were performed using the {\tt Macaulay2} package {\tt MSSM.m2}, which will be part of the next distribution of {\tt Macaulay2}, available at {\tt https://macaulay2.com}. Our work here opens up doors to a number of followup questions on the geometric structure of the vacuum moduli space; we are in the process of extending our computational toolkit to handle other variants of the superpotential.
\vspace{-.1in}
\subsection*{Acknowledgments}
\vspace{-.1in}
Much of this work was done while the last two authors were visitors at the Mathematical Institute at Oxford; we thank the institute for a congenial working environment.
YHH is supported by a Leverhulme Trust Research Project (Grant No.\ RPG-2022-145) and STFC grant ST/J00037X/3.
VJ is supported by the South African Research Chairs Initiative of the Department of Science, Technology, and Innovation and the National Research Foundation, grant 78554.
BDN is supported by NSF PHY-2209903.
HS was supported by NSF DMS-2006410 and a Leverhulme Visiting Professorship at Oxford.
MS was supported by NSF DMS-2001367 and a Simons Fellowship at Oxford.

\appendix
\pagebreak

\section{Appendix: Gauge Invariant Operators in the MSSM}\label{app:GIOs}
\vspace{-.1in}
\begin{table}[ht]
{\small
\begin{tabular}{|c||c|c|c|}\hline
\mbox{Type} & \mbox{Explicit Sum} & \mbox{Index} & \mbox{Number} \\
\hline \hline
$\nu$  & $\nu^i$ & $i=1,2,3$ & 3 \\ \hline
$LH$  & $L^i_\alpha H_\beta \eps$ & $i=1,2,3$ & 3 \\ \hline
$H\barH$ & $H_\alpha \barH_\beta \eps$ & & 1  \\ \hline
$udd$ & $u^i_a d^j_b d^k_c \epsilon^{abc}$ & $i,j=1,2,3$; $k=1,\ldots,j-1$ & 9  \\ \hline
$LLe$ & $L^i_\alpha L^j_\beta e^k \eps$ & $i,k=1,2,3$; $j=1,\ldots,i-1$ & 9  \\ \hline
$QdL$ & $Q^i_{a, \alpha} d^j_a L^k_\beta \eps$ & $i,j,k=1,2,3$ & 27 \\ \hline
$QuH$ & $Q^i_{a, \alpha} u^j_a H_\beta \eps$ & $i,j=1,2,3$ & 9 \\ \hline
$Qd\barH$ & $Q^i_{a, \alpha} d^j_a \barH_\beta \eps$ & $i,j=1,2,3$ & 9 \\ \hline
$L\barH e$ & $L^i_\alpha \barH_\beta \eps e^j$ & $i,j=1,2,3$ & 9 \\ \hline
$QQQL$ & $Q^i_{a, \beta} Q^j_{b, \gamma} Q^k_{c, \alpha} L^l_\delta \epsilon^{abc} \epsilon^{\beta\gamma}\epsilon^{\alpha\delta}$ & $\ba{l} i,j,k,l=1,2,3; i\ne k, j\ne k, \\ j \le i, (i,j,k) \ne (3,2,1) \ea$ & 24 \\ \hline
$QuQd$ & $Q^i_{a, \alpha} u^j_a Q^k_{b, \beta} d^l_b \eps$ & $i,j,k,l=1,2,3$ & 81 \\ \hline
$QuLe$ & $Q^i_{a, \alpha} u^j_a L^k_{\beta} e^l \eps$ & $i,j,k,l=1,2,3$ & 81 \\ \hline
$uude$ & $u^i_a u^j_b d^k_c e^l \epsilon^{abc}$ & $i,j,k,l=1,2,3; j<i$ & 27 \\ \hline
$QQQ\barH$ & $Q^i_{a, \beta} Q^j_{b, \gamma} Q^k_{c, \alpha} \barH_\delta \epsilon^{abc} \epsilon^{\beta\gamma} \epsilon^{\alpha\delta}$ & $\ba{l} i,j,k=1,2,3; i\ne k, j\ne k, \\ j\le i, (i,j,k) \ne (3,2,1) \ea$ & 8 \\ \hline
$Qu\barH e$ & $Q^i_{a, \alpha} u^j_a \barH_\beta e^k \eps$ & $i,j,k =1,2,3$ & 27 \\ \hline
$dddLL$ & $d^i_a d^j_b d^k_c L^m_\alpha L^n_\beta \epsilon^{abc} \epsilon_{ijk} \eps$ & $m,n=1,2,3, n<m$ & 3 \\ \hline
$uuuee$ & $u^i_a u^j_b u^k_c e^m e^n \epsilon^{abc} \epsilon_{ijk}$ & $m,n=1,2,3, n \le m$ & 6 \\ \hline
$QuQue$ & $Q^i_{a, \alpha} u^j_a Q^k_{b, \beta} u^m_b e^n \eps$ & $ \ba{l} i,j,k,m,n=1,2,3; \\ \mbox{antisymmetric}\{ (i,j), (k,m) \}
\ea$ & 108  \\ \hline
$QQQQu$ & $Q^i_{a, \beta} Q^j_{b, \gamma} Q^k_{c, \alpha} Q^m_{f,\delta} u^n_f \epsilon^{abc} \epsilon^{\beta\gamma} \epsilon^{\alpha\delta}$ & $\ba{l} i,j,k,m,n=1,2,3; i\ne k, j\ne k, \\ j\le i, (i,j,k) \ne (3,2,1) \ea$ & 54\\ \hline

%
$dddL\barH$ & $d^i_a d^j_b d^k_c L^m_\alpha \barH_{\beta} \epsilon^{abc}\epsilon_{ijk} \eps$ & $m=1,2,3$ & 3 \\ \hline
$uudQdH$ & $u^i_a u^j_b d^k_c Q^m_{f, \alpha}d^n_f H_\beta \epsilon^{abc} \eps$ & $i,j,k,m,n=1,2,3; j<i$ & 81 \\ \hline
$(QQQ)_4LLH$  & $(QQQ)_4^{\alpha\beta\gamma} L^m_\alpha L^n_\beta H_\gamma$ & $m,n=1,2,3, n \le m$ & 6 \\ \hline
$(QQQ)_4LH\barH$ & $(QQQ)_4^{\alpha\beta\gamma} L^m_\alpha H_\beta \barH_\gamma$ & $m=1,2,3$ & 3 \\ \hline
$(QQQ)_4H\barH\barH$ & $(QQQ)_4^{\alpha\beta\gamma} H_\alpha \barH_\beta \barH_\gamma$ & & 1 \\ \hline
$(QQQ)_4LLLe$ & $(QQQ)_4^{\alpha\beta\gamma} L^m_\alpha L^n_\beta L^p_\gamma e^q$ & $m,n,p,q=1,2,3, n\le m,p \le n$ & 30 \\ \hline %
$uudQdQd$ & $u^i_a u^j_b d^k_c Q^m_{f, \alpha}d^n_f Q^p_{g, \beta} d^q_g \epsilon^{abc} \eps$ & $\ba{l} i,j,k,m,n,p,q=1,2,3; \\ j<i, \mbox{antisymmetric}\{(m,n), (p,q)\} \ea$ & 324 \\ \hline
$(QQQ)_4LL\barH e$ & $(QQQ)_4^{\alpha\beta\gamma} L^m_\alpha L^n_\beta \barH_\gamma e^p$ & $m,n,p = 1,2,3, n \le m$ & 18 \\ \hline %
$(QQQ)_4L\barH\barH e$ & $(QQQ)_4^{\alpha\beta\gamma} L^m_\alpha \barH_\beta \barH_\gamma e^n$ & $m,n=1,2,3$ &  9 \\ \hline
$(QQQ)_4\barH\barH\barH e$ & $(QQQ)_4^{\alpha\beta\gamma} \barH_\alpha \barH_\beta \barH_\gamma e^m$ & $m=1,2,3$ &  3 \\ \hline
\end{tabular}}
\caption{
\textsf{The 
$976$ generators of gauge invariant operators for the MSSM. For $QQQQu$, a na\"{\i}ve count gives $72$, but there are $18$ linear relations, reducing the number of independent operators.}
}
\label{tab:giofull}
\vspace{-1in}
\end{table}
\pagebreak
{\small
\bibliographystyle{JHEP}
\bibliography{refs}
}

\end{document}